\newcommand{\be}{\begin{equation}}
\newcommand{\ee}{\end{equation}}
\newcommand{\ba}{\begin{eqnarray}}
\newcommand{\ea}{\end{eqnarray}}
\newtheorem{theorem}{Theorem}
\begin{document}
	
\title{Performance of Dense Coding and Teleportation for Random States --\\  Augmentation via Pre-processing}

\author{Rivu Gupta$^{1}$, Shashank Gupta$^{2}$, Shiladitya Mal$^1$, Aditi Sen (De)$^1$}

\affiliation{$^1$ Harish-Chandra Research Institute and HBNI, Chhatnag Road, Jhunsi, Allahabad - 211019, India}
\affiliation{$^2$ S. N. Bose National Centre for Basic Sciences, Block JD, Sector III, Salt Lake, Kolkata - 700 106, India}
\date{\today}
\begin{abstract}

In order to understand the resourcefulness of a natural quantum system in quantum communication tasks, we study the dense coding capacity (DCC) and teleportation fidelity (TF) of Haar uniformly generated random multipartite states of various ranks. We prove that when a rank-2 two-qubit state, a Werner state, and a pure state possess the same amount of entanglement, the DCC of a rank-2 state belongs to the envelope made by  pure and  Werner states. In a similar way, we  obtain an upper bound  via the generalized Greenberger-Horne-Zeilinger  state  for rank-2 three-qubit states when the dense coding with two senders and a single receiver  is performed and entanglement is measured in the senders:receiver bipartition. The normalized frequency distribution of DCC  for randomly generated  two-, three- and four-qubit density matrices with global as well as local decodings at the receiver's end  are reported. The estimation of mean DCC for two-qubit states is found to be in  good agreement  with the numerical simulations.  Universally, we observe that the performance of random states for dense coding as well as teleportation decreases with the increase of the rank of states which we have shown to be surmounted by the local pre-processing operations performed on the shared states before starting the protocols, irrespective of the rank of the states. The local pre-processing employed here is based on  positive operator valued measurements  along with classical communication and we show that unlike dense coding with two-qubit random states, senders' operations are always helpful to probabilistically enhance the capabilities of implementing dense coding as well as teleportation.  


\end{abstract}

\maketitle
\section{Introduction}
\label{sec:intro}

The basic quantum information processing tasks like dense  coding \cite{bennettwiesner, Aditi} and teleportation \cite{teleoriginal} demonstrate the usefulness of quantum entanglement \cite{HoroRMP} in the field of quantum information science. In particular, the idea of dense coding  (DC) is to employ prior quantum  correlation between the sender and the receiver for enhancing classical message-carrying capacity while  in teleportation, unknown state gets transferred to a remote location without physical transportation  with the help of a shared entangled state and two bits of classical communication. Performance of dense coding which is dubbed as the dense coding capacity (DCC) of the shared channel  is quantified by the number of messages in a unit of bits carried from the sender to the receiver \cite{Bose, Hiroshima, Bruss, Ziman03}. On the other hand, in teleportation, the relevant figure of merit is the teleportation fidelity (TF) which measures the closeness between the state obtained by the receiver and the target state to be teleported at the sender's end \cite{teleoriginal,  Pawel, Horo96}. Over the years,  spectacular experiments have been performed to realize both the protocols   by using photons, massive particles, nuclear magnetic resonance, etc. \cite{Pirandola, atomicDC, DCexpphoton, DCNMR, Satexp}. 

After their inception, these two protocols have been generalised in many ways. Going beyond bipartite scenario, dense coding has been extended to a scenario of multiple senders and multiple receivers, which enlarges the possibility of encoding-decoding strategies in various ways \cite{Aditi, DCCamader, Prabhu,Tamoghna,  DCCTamo1}. In the case of multiple senders, it was shown that invoking more general encoding than unitary, collective encoding is better than the individual encoding \cite{Horodecki_2012} while for multiple receivers situated at far-apart  locations, locally  accessible information \cite{Localaccess} plays a crucial role to obtain the DCC when receivers are allowed to perform local operations and classical communication (LOCC) for decoding \cite{Aditi, DCCTamo1}. Similarly, original teleportation protocol which is commonly known as standard teleportation scheme (STS) also has been generalised which include telecloning \cite{telecloning}, multi-port  teleportation \cite{portPRL, portPRA, portMichal}, teleportation with multiple sender-receiver pairs \cite{Aditicap, Saptarshicap}, counterfactual teleportation \cite{counter}, reusing teleportation channel \cite{weaktele}.

In a realistic situation, ideal conditions to achieve perfect DCC and TF are never met due to noises in the channel and imperfections in the apparatuses. To circumvent this, Bennett \emph{et al.} proposed a method of distillation \cite{dist1, dist2, dist3, dist4} which is a collective pre-processing scheme involving many copies of shared  noisy entangled  states and  LOCC for  obtaining pure maximally entangled state, suitable for perfect DC and teleportation. The problem with distillation is that it requires a large number of resources and successfully works only when singlet fraction is above some threshold value \cite{dist2}. In the context of teleportation,  this problem has been resolved by invoking filtering operation which acts at the single copy-level and can probabilistically  provide an output having high TF  \cite{dist4}. Surprisingly, it was also shown that for a certain class of states, a dissipative channel can activate teleportation power \cite{Badziag}. For two-qubit states, optimal teleportation protocol is known together with optimal filter \cite{Verstraete}. Very recently, it has been shown   that in higher dimension,  filtering can also be effective for revealing hidden teleportation power of shared Werner state \cite{Werner89} and a class of rank-deficient state used as channel  \cite{Liang}. In obtaining a quantum advantage in the dense coding protocol,  coherent information \cite{Schumacher} plays the role similar to singlet fraction in the case of teleportation and up to our  knowledge, improving the DCC of a channel via filtering has not been investigated as yet. 

On a different front,  randomly generated density matrices \cite{Kendon, Zyczkowskibook, Hayden, Enriquez, Pandit, Gross, Soorya, Ratul, Hastings} provide a vital tool for analyzing and studying the trends of typical states in state space. They not only arise naturally in the chaotic process \cite{Haake} but also can be generated in a systematic manner based on randomness in the outcome of quantum measurement \cite{Ratul}. Moreover, against the intuition of observing random behavior, it has been found that random states show some universal properties -- average quantum correlations among randomly generated states increase with the increase of the number of parties \cite{dist4, Badziag, Verstraete, Werner89}.  Random states were instrumental in disproving a long-standing conjecture in quantum information science regarding additivity of minimal output entropy \cite{Schumacher} and in showing constructive feedback in presence of a non-Markovian noisy environment \cite{Rivurecent}.

In the present work, we investigate the patterns of  capabilities obtained from two prominent quantum communication tasks  for Haar uniformly generated random shared channels. In particular, we estimate the distributions of the dense coding capacity of states having different ranks  in three specific scenarios: (1) a single sender and a single receiver, (2) two senders and a single receiver, and (3) two senders and two receivers. Note that in the first two cases, the decoding is done by global operations while in the third situation, the encoded states can only be decoded via LOCC. We prove  that the DCC of a rank-2 two-qubit state lies in  the envelope of the DCC of a pure state and  a Werner state when all of them possess the same amount of entanglement.  We numerically confirm  that such upper and lower bounds hold also for rank-3 and -4 two-qubit states. On the other hand, we show that  when three-qubit generalized Greenberger-Horne-Zeilinger (gGHZ) \cite{GHZ}  and a rank-2 state have the same amount of entanglement in the senders:receivers bipartition, the DCC of the gGHZ state is  higher than that of the rank-2 three-qubit state.  The mean of the frequency distribution for DCC is obtained numerically for random states which are shown to be in good agreement with analytical estimation. In all  scenarios of DC and teleportation protocols, we observe that the efficiencies  decrease with the increase of rank for the random states.  We apply  local pre-processing operations in the form of dichotomic positive operator valued measurements (POVM)s on the shared state before starting the protocol and report that the performance can be enhanced by such pre-processing mechanism for random states. Specifically, by employing three kinds of figures of merit, we establish that the local pre-processing at the sender's or the receiver's  or both the ends can help to probabilistically improve  the capacities, as well as the  teleportation fidelities, especially in higher ranked random states. One should note here that the pre-processing operations exploited here cannot be included in the encoding-decoding strategies (cf. \cite{Horodecki_2012}).

The paper is organized in the following way. In Sec. \ref{sec:Prelim}, we recapitulate the generation of random states of different ranks, the dense coding capacity, the teleportation fidelity, and the general dichotomic local POVM elements for pre-processing. In Sec.  \ref{sec:DCCbefore}, we provide our analytical results and numerical observations on dense coding capacity before pre-processing while the results obtained after local pre-processing is presented in Sec. \ref{sec:DCCafter}.    In Sec. \ref{TF},  observations and results on teleportation fidelity before and after pre-processing are reported. Finally, we conclude with a summary of results in Sec. \ref{Conclusion}.

%
%

\section{Definitions: Dense coding capacity and teleportation capability of multipartite random states}
\label{sec:Prelim}

In this section, we briefly describe dense coding  capacity involving  an arbitrary number of  senders and a single as well as two receivers and  define the teleportation fidelity for two-qubit states. Since we perform DC and teleportation for randomly generated states, let us first elucidate the procedure for such simulations \cite{Zyczkowskibook}. Haar uniform generation of pure states with arbitrary number of parties having complex coefficients, \(x_i = a_i + i b_i\), (\(a_i\) and \(b_i\)s are real numbers), where real numbers are taken from a Gaussian distribution with mean $0$ and standard deviation unity, denoted by \(G(0,1)\) is performed. Random mixed states of various ranks can be obtained from an appropriate multipartite pure state after taking partial traces of  suitable subsystem. For example,    two-qubit density matrices with rank-2, -3, and -4  can be simulated from random tripartite pure states  chosen in complex Hilbert spaces of $C^2\otimes C^2 \otimes C^2$, in \(C^2\otimes C^2 \otimes C^3\) and \(C^2\otimes C^2 \otimes C^4\) respectively \cite{Zyczkowskibook}.    


\subsection{Dense Coding Capacity}
\label{Prelim_DCC}

Consider a multipartite communication channel formed by multiple senders, \(S_1, S_2, \cdots, S_N\) and a single receiver, \(R\) in which classical information transmission via quantum states occurs. 
As originally proposed by Bennett and Weisner \cite{bennettwiesner}, it can be shown that  if senders and a receiver apriori share an entangled state, \(\rho^{S_1 S_2\ldots S_N R}\), more bits of classical information can be encoded and sent to the receiver compared to a protocol with unentangled states.  The maximum classical information accessible by the receiving party is called the dense coding capacity  \cite{Bose, Hiroshima,  Ziman03, Bruss, DCCamader}. 
We consider two scenarios depending on the number of senders and receivers as mentioned earlier. 

\textit{1) \(N\)-senders and a single receiver (NS-1R).} Suppose \(N\)-senders and a single receiver share  an (N+1)-party quantum state, $\rho^{S_1 \cdots  S_{N}R}$. The dense coding capacity in this case reads as
\begin{align}
&& \mathbb{C}^{NS-1R} (\rho^{S_1 \cdots S_{N}R}) = \max[\log_2 d_{S_1} + \cdots + \log_2 d_{S_{N}}, \nonumber \\
&& \log_2 d_{S_1} + \cdots + \log_2 d_{S_{N}}  + S(\rho^{R}) - S(\rho^{S_1 \cdots S_{N}R})]
\label{DCC_formula1}
\end{align}
where, $d_{S_1}, \cdots ,d_{S_N}$ are the dimension of the senders' subsystems, $S_1, \cdots, S_N$, respectively.  $\rho_{R} $
is the reduced state at the receiver's end   and \(S(\sigma) = - \mbox{tr} (\sigma \log_2 \sigma)\) is the von-Neumann entropy. The first term represents the amount of classical information that can be sent only by using classical protocol while a quantum state is suitable for dense coding if \( S(\rho^{R}) - S(\rho^{S_1 \cdots S_{N}R}) >0\). Notice that for two-qubits involving a single sender and a single receiver (1S-1R), it reduces to \(1 + S(\rho^R) - S(\rho^{SR})\). 

\textit{2) \(N\)-senders and \(2\)-receivers (NS-2R).} Let us now consider that there are two receivers, \(R_1\) and \(R_2\), in the dense coding protocol which again involve arbitrary number of senders, $S_1, S_2, \cdots, S_N$, sharing an (N+2)-party quantum state,  $\rho^{S_1 \cdots  S_{N}R_1R_2}$.  In this situation, although we do not know the exact DCC, the upper bound is known \cite{DCCamader}. 
Let 'k' senders, $S_1, \cdots, S_k$, send their parts of the shared state to the first receiver, $R_1$, while the remaining senders, $S_{k+1}, \cdots, S_{N}$, send their states to the second one, $R_2$. The upper bound on the dense coding capacity  is then represented by  
\begin{align}
&\mathbb{C}^{NS-2R}(\rho^{S_1 \cdots S_{N}R_1R_2}) \leq \max [\log_2 d_{S_1} + \cdots + \log_2 d_{S_{N}}, \nonumber \\ 
&\log_2(d_{S_1}) + \cdots + \log_2(d_{S_{N}}) + S(\rho^{R_1}) \nonumber \\ 
&+ S(\rho^{R_2})- \text{max}(S(\rho^{S_1 \cdots S_{k}R_1}), S(\rho^{S_{k+1} \cdots S_{N}R_2}))] \equiv U^{NS-2R}
\label{DCC_formula2}
\end{align}
where $\rho^{R_1} = \text{tr}_{S_1 \cdots S_NR_2} \rho^{S_1 \cdots S_N R_1R_2}$ and $\rho^{R_2} = \text{tr}_{S_1 \cdots S_NR_1} \rho^{S_1 \cdots S_N R_1R_2}$ are the reduced states of the first  and the second receiver respectively. Similarly,  \(\rho^{S_1 \cdots S_{k} R_1} = \text{tr}_{S_{k+1} \cdots S_N R_2} \rho^{S_1 \cdots S_N R_1 R_2} \)  and \(\rho^{S_{k+1} \cdots S_{N}R_2} = \text{tr}_{S_{1} \cdots S_{k} R_2} \rho^{S_1 \cdots S_N R_1 R_2} \). We will investigate the behavior of   the upper bound for the Haar uniformly generated four-qubit states where there are two senders and two receivers.

\subsection{Teleportation Fidelity}
\label{Prelim_TF}

In the teleportation protocol, the task is to send an unknown quantum state  to the receiver. If a shared state is maximally entangled, such a task can be accomplished by performing the entangled measurements at the sender's side and communicating the outcomes to the receiver. 
 Let us suppose that the sender, Alice and  the receiver, Bob share an arbitrary  bipartite state $\rho^{SR}$. The teleportation fidelity of $\rho^{SR}$ can be expressed as \cite{Horo96, Pawel}
\begin{equation}
	\mathbb{F} = \frac{\text{d}f + 1}{\text{d}+1}
\end{equation}
where $f = \underset{\{\phi\}}{\text{max}} \langle \phi| \rho^{SR} |\phi \rangle$ with $\{\phi\}$ being the set of all maximally entangled two-qudit states and \(d\) is the dimension of the input state to be teleported. Notice that Alice and Bob have the freedom to apply any trace-preserving local quantum operations and classical communication (LQCC) in order to maximize $f$ which is, in general, hard to perform even numerically. 

Given a two-qubit state, $\rho^{SR}$, we can calculate the optimal teleportation fidelity by using the Horodecki's prescription \cite{Horo95}, i.e.,
\begin{equation}
	\mathbb{F}_{max} \leq \frac{1}{2}(1+\frac{1}{3} \text{tr}\sqrt{C^{\dagger}C})
	\label{TF_Horo}
\end{equation}
where the elements of the matrix, $C = [C_{ij}]$, are given by $C_{ij} = \text{tr}[\rho^{SR} (\sigma_i \otimes \sigma_j)]$, where $\sigma$'s are the Pauli spin matrices. Furthermore, if the state $\rho^{SR}$ violates the Clauser-Horne-Shimony-Holt  inequality \cite{Bell, CHSH}, i.e., if it satisfies $M(\rho^{SR}) > 1$ \cite{Horo95}, where $M(\rho^{SR}) =(u_1 + u_2)$ with  $u_1$ and $u_2$ being the highest two eigenvalues of the matrix $C^{\dagger}C$,  the inequality (\ref{TF_Horo}) is replaced by an equality. 

\subsection{Preprocessing Operations}

We know that the initial DCC or TF of a state can probabilistically  be increased if some or all 
of the parties  apply local pre-processing operations \cite{Perescoll, Gisin, Horodecki_2012, Verstraete}. If  the DCC (TF) is initially in the classical region and after pre-processing, it gives a quantum advantage, we say that the state exhibits \textit{hidden} DCC (TF). If the initial state already shows quantum advantage in  dense coding (teleportation), and after preprocessing, the advantage gets improved with some positive probability,  those states demonstrate \emph{enhancements} in DCC (TF). 
For the present study, we apply the most general dichotomic POVMs \cite{Busch, YuOh, Shila2016, Shila2017} as local pre-processing operations. \\
	$\bullet$ {\bf General dichotomic POVMs:} The general dichotomic POVMs can be represented as
\begin{equation}
	E_i^{\pm} = \lambda P_i^{\pm} + \frac{1 \pm \gamma_i - \lambda_i}{2}I
	\label{POVM_i}
\end{equation}
where $\lambda_i$ is the sharpness parameter, such that $0\leq \lambda_i \leq1$, $|\lambda_i| + |\gamma_i|\leq 1$ and $E_i^+ + E_i^- = \openone$, with \(\openone$ being the identity operator. $P_i^{+} = \cos \frac{\theta_i}{2} |0\rangle +  e^{ i \phi_i} \sin \frac{\theta_i}{2} |1\rangle$ and its orthogonal  projector is $P_i^-$. Here, i represents the party which applies the POVM. 
To find the optimal POVM, we have to maximize over the set of parameters, \(\{\theta_i, \phi_i, \lambda_i\}\). If the shared state is two-qubits and both the parties perform local preprocessing before starting the protocol,  we have to carry out maximization over  six parameters  to evaluate maximal  DCC (TF). 
In a multipartite shared state, $\rho^{S_1 \cdots S_N R}$, considering  preprocessing operations performed by first $k$ senders,  the output state  after the action of local POVMs is  given by
\begin{widetext}
\begin{equation}
 	\rho^{S_1 \cdots S_NR}_P = \frac{(\sqrt{E_{S_1}^{\pm}} \otimes \cdots \otimes \sqrt{E_{S_k}^{\pm}}\otimes \openone_{S_{k+1}} \otimes \cdots \otimes \openone_{R}) \rho^{S_1 \cdots S_NR} (\sqrt{E_{S_1}^{\pm\dagger}} \otimes \cdots \otimes \sqrt{E_{S_k}^{\pm\dagger}} \otimes \openone_{S_{k+1}} \otimes \cdots \otimes \openone_{R})}{\text{tr}[(\sqrt{E_{S_1}^{\pm}} \otimes \cdots \otimes \sqrt{E_{S_k}^{\pm}}\otimes \openone_{S_{k+1}} \otimes \cdots \otimes \openone_{R}) \rho^{S_1 \cdots S_NR} (\sqrt{E_{S_1}^{\pm\dagger}} \otimes \cdots \otimes \sqrt{E_{S_k}^{\pm\dagger}} \otimes \openone_{S_{k+1}} \otimes \cdots \otimes \openone_{R})]}.
	\label{POVM_state1}
\end{equation}
\end{widetext}
Notice that the DCC (TF) of the resulting state is investigated after maximizing over $3k$ parameters involved in \(k\) local POVMs. 

\section{Dense Coding Capacity of Random States without preprocessing}
\label{sec:DCCbefore}

Let us first present the behavior of dense coding capacity of Haar uniformly generated multipartite states. In particular, we analyze the frequency distributions in three scenarios. 

\emph{Case 1.} A single sender - a single receiver (1S-1R) pair shares  two-qubit random states with different ranks. \\
\emph{Case 2.} Two senders and a single receiver (2S-1R) have three-qubit Haar uniformly generated states having rank-1, -2, -3, -4, -5 and -6. \\
\emph{Case 3.} Haar uniformly simulated four-qubit pure as well as states having rank -2, - 3 and -4 are initially distributed among two senders and two receivers (2S-2R) situated in distant locations. 

Before proceeding further, note that  quantum advantages are not obtained when the DCC, \(\mathbb{C}\), is  unity for 1S-1R, two for both 2S-1R and 2S-2R situations provided the dimension of each party is restricted to be two.  In cases of 1S-1R and 2S-1R, if the shared state is pure,  the second term in Eq. (\ref{DCC_formula1}) vanishes and hence DCC reduces to the von Neumann entropy of the receiver's part. Since the entanglement of a pure bipartite state can  uniquely be quantified by the von Neumann entropy of the local density matrix \cite{dist1}, the quantum advantage  can always be achieved for all entangled pure states. For mixed two-qubit states, we will show below by proving a theorem that such a connection between shared entanglement and DCC cannot be established. 

In a 2S-1R case, a relation between genuine multipartite entanglement content of the shared pure state and DCC does not hold \cite{DCCTamo1} and to our knowledge, no such results are known for mixed three-qubit states which will also be established for rank-2 three-qubit states.  In this work, we also concentrate on mixed three-qubit states with rank upto six. 


Let us first investigate the behavior of dense coding capability of random states. Entire calculations and analysis are  based on \(5\times 10^4\) Haar uniformly generated states for each case. In all these scenarios, the normalized frequency distribution of DCC, given by \(\mathcal{F}_{DC} = \frac{N_{DC} (C(\rho))}{N_S}\), with \(N_{DC} (C(\rho))\) being the number of states having DCC \( C(\rho)\) and  \(N_S\)  being the total number of simulated states, is calculated except in the situation of  two senders and two receivers case where the normalized distribution of the upper bound is analysed, as depicted in Fig. \ref{fig:distriDCC}. The observations  in the figure are listed below:

\begin{enumerate}
\item Obtaining a quantum advantage in the DC protocol  decreases with the increase of the rank of the states. It can be argued that such behavior is seen because the average entanglement content of the Haar uniformly generated states decreases with the rank of the states. However, such a simple explanation may not hold as we will show below. 

\item Percentages of states showing DCC more than the classical bound are  \(50.09\%\), \(4.80\%\) and \(0.30\%\) for two-qubit states with rank-2, -3, and 4 respectively. For the 2S-1R DC scheme, it turns out to be \(50.31\%\), \(0.08\%\)  for rank-2 and  rank-3 states while no states are found to give a quantum advantage from rank$\geq$4 random states. All pure states are good for classical information transmission. 

\item The upper bound in the 2S-2R case showing quantum superiority is  seen for  \(97.36\%\)  of rank-2 four-qubit states and  for all pure random states. For higher ranks, unlike 2S-1R DC protocol, the above percentage decreases but remains significant, being \(95.77\%\) and \(95.34\%\) for ranks -3 and -4 respectively.

\item The pattern of  \(\mathcal{ F }_{DC}\) also changes with rank as well as with the increase in the number of  senders and receivers. Specifically, we observe that the fraction  of states, showing nonclassical capacity decreases with the rank of the random states as shown in Fig. \ref{fig:distriDCC} (lower panel  (right)), irrespective of the DC schemes. It can also be captured by computing the mean and standard deviation (SD) of the distribution which we will discuss in the succeeding sections. We will also study how the distribution changes with the introduction of preprocessing in terms of POVM by different figures of merits. 

\end{enumerate}
\begin{figure}[!ht]
	\resizebox{8cm}{7cm}{\includegraphics{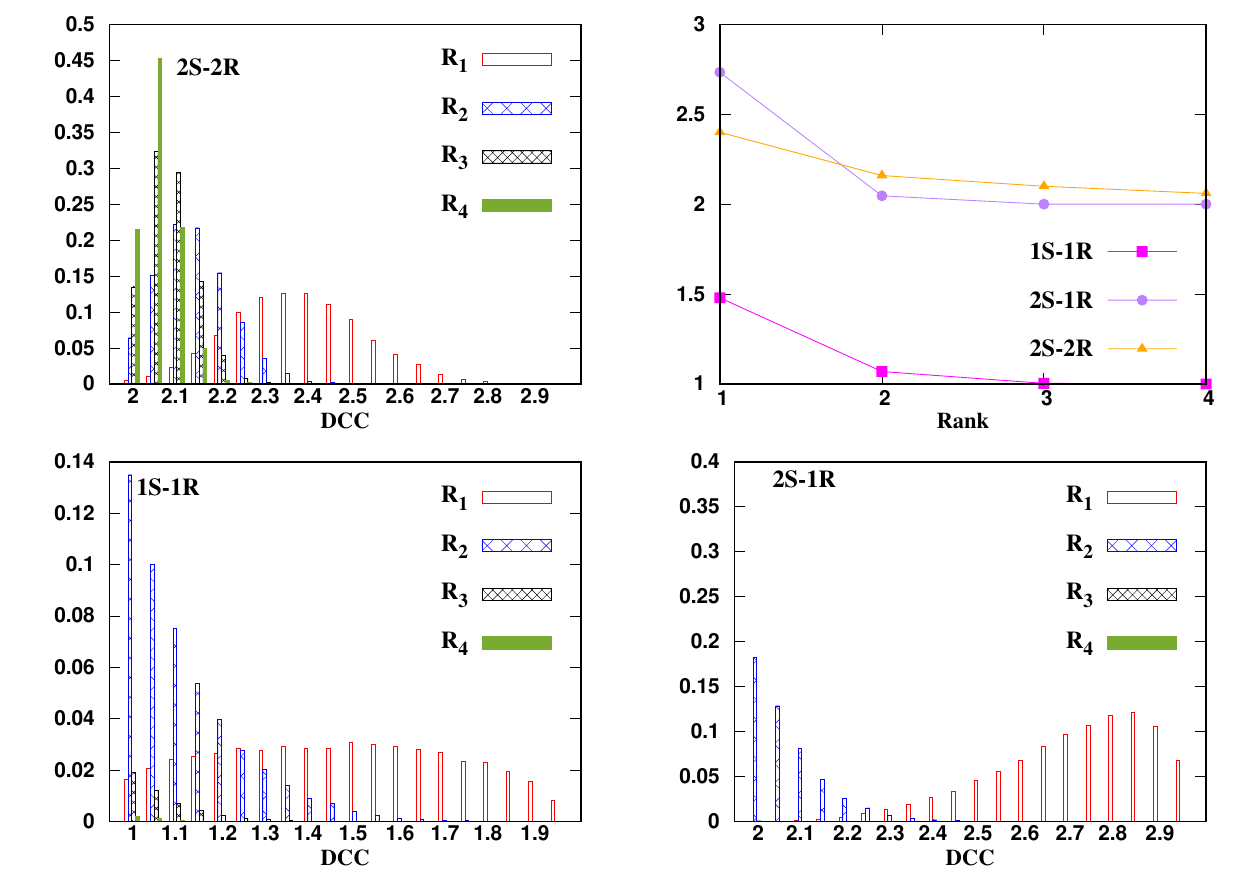}}
	\caption{\footnotesize (Color online)  (Upper panel) and (Lower panel (left)). The normalized frequency distribution, \(\mathcal{F}_{DC}\),  of Haar uniformly generated states (vertical axis) against  DCC (horizontal axis). Upper Panel.  (Left) A single sender-single receiver (1S-1R) and  (Right) single sender-two receivers (1S-2R) scenarios. Lower panel (left) two senders - two receivers (2S-2R).  \(R_1, \ldots R_4\) denotes the random states of rank-1 to rank-4. Lower Panel (right). Fraction of states  having quantum advantage in dense coding  vs. the rank of random states for three DC protocols. 	Notice that the large fraction of high rank mixed states have DCC in classical region and the general tendency to have quantum advantage decreases with the increase in rank. The rate of decrease of the upper bound for DCC with rank  in 2S-2R case is significantly slower than the rate of DCC for 1S-1R and 2S-1R.  All the axes are dimensionless. }
\label{fig:distriDCC}
\end{figure}

To establish the fact that for mixed bipartite states, DC and entanglement content is not related, we will now show that the  DCC of random states has a universal lower bound. In particular, we find that the DCC capacity of the Werner state, given by 
\begin{equation}
\rho_W = p|\phi^+\rangle \langle \phi^+ | + \frac{(1-p)}{4} I_4
\label{Werner}
\end{equation}
where $|\phi^+\rangle = \frac{1}{\sqrt{2}} (|00\rangle + |11\rangle)$ with $0\leq p \leq 1$ and $I_4$ being the identity matrix in $C^2 \otimes C^2$,  gives a lower bound for all randomly generated two-qubit states of rank-1 to rank-4  ( see Fig. \ref{R2_cap_werner} ). Moreover, we observe that the DCC of Haar uniformly generated states with rank-2, -3 and -4 lies between the envelopes obtained for pure states, and the Werner states.
 Let us now prove the lower and upper bounds  for rank-2 states. 

\begin{figure}[!ht]
	\resizebox{9cm}{8cm}{\includegraphics{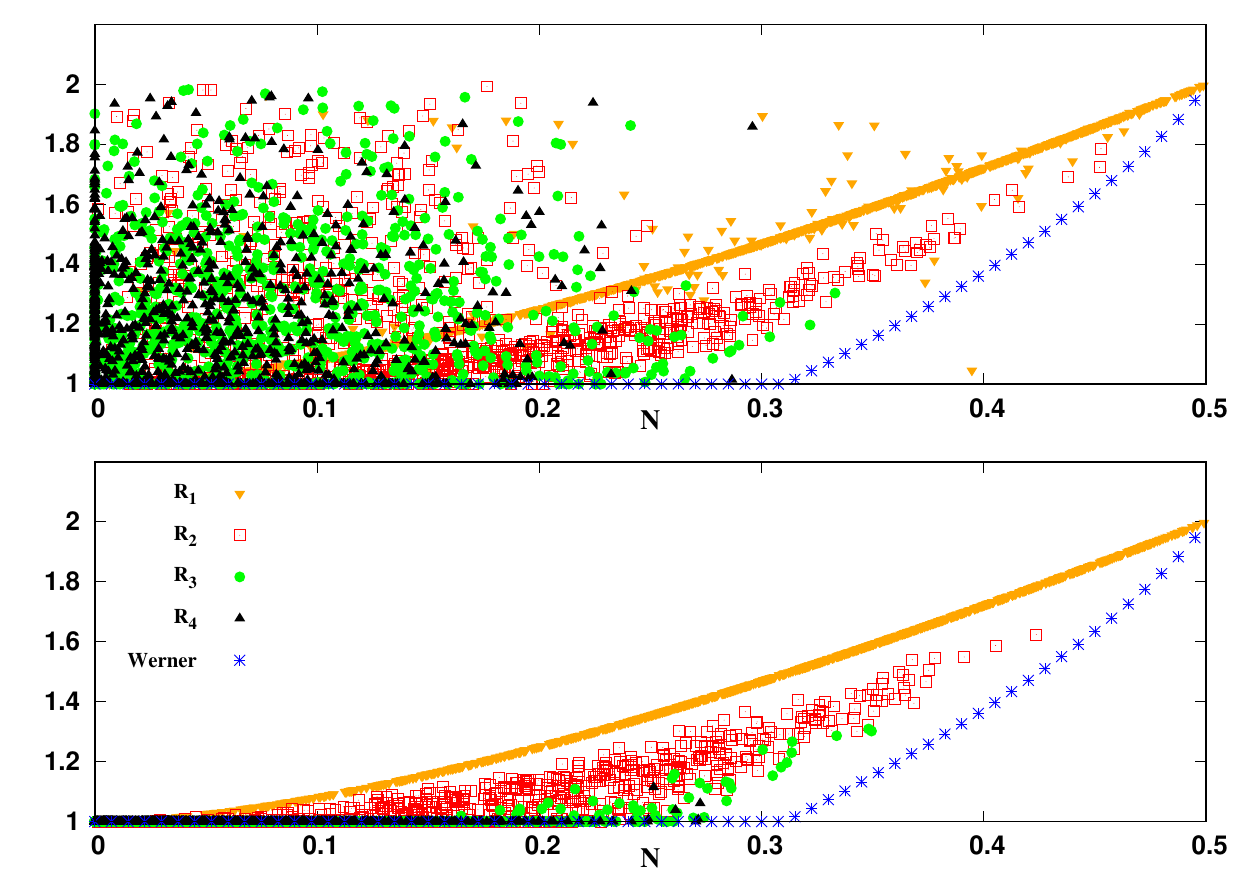}}
	\caption{(Color online) Lower panel.  Dense coding capacity of randomly generated two-qubit states  (vertical axis) against entanglement (horizontal axis) which is quantified by negativity.  Blue line represents the Werner state, \(\rho_W\) while the orange line represents the two-qubit pure state. Upper panel. The maximal cost of average  DCC, defined later in Eq. (\ref{averagecost2_DCC}) of random two-qubit states of different rank after two-sided POVMs is plotted with respect to negativity of the given initial state. We notice that the lower bound still holds after local POVMs applied by both the parties. The vertical axis is in bits while the horizontal axis is in ebits. }
\label{R2_cap_werner}
\end{figure}

\begin{theorem} 
The dense coding capacity of the arbitrary mixed two-qubit state of rank-2 in the 1S-1R case is upper bounded by the capacity of a pure state and lower bounded by a two-qubit Werner state when all of them possess the same amount of entanglement. 
\end{theorem}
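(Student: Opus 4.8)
The plan is to start from the reduced capacity established above: for a two-qubit state shared by one sender and one receiver the relevant quantity is $\mathbb{C}(\rho^{SR}) = 1 + S(\rho^R) - S(\rho^{SR})$ whenever it exceeds the classical value. Both $\mathbb{C}$ and the negativity are invariant under local unitaries on $S$ and on $R$, so the first step is to fix a canonical form for an arbitrary rank-2 state using this freedom, reducing to the minimal set of parameters that controls the entanglement and the two entropies. The simplification special to rank-2 is that the global spectrum is $\{\mu, 1-\mu, 0, 0\}$, so $S(\rho^{SR}) = H(\mu)$ is the binary entropy $H(x) = -x\log_2 x - (1-x)\log_2(1-x)$ of a single eigenvalue, while the receiver's qubit contributes $S(\rho^R) = H(r)$ through the smaller eigenvalue $r$ of $\rho^R$. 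Hence the capacity collapses to $\mathbb{C} = 1 + H(r) - H(\mu)$, and the whole problem reduces to controlling the pair $(r,\mu)$ at a fixed value of the negativity $N$.

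Next I would write $N$ explicitly as a function of the canonical parameters and record the two reference curves. For the pure state $\cos\alpha\,|00\rangle + \sin\alpha\,|11\rangle$ one has $S(\rho^{SR}) = 0$ and $N_p = 2\cos\alpha\sin\alpha$, so $\mathbb{C}_p = 1 + H(\cos^2\alpha)$ is a fixed increasing function of $N$. For the state of Eq.~(\ref{Werner}) the receiver is maximally mixed, $S(\rho^R) = 1$, the global entropy is the entropy of the spectrum $\{\tfrac{1+3p}{4}, \tfrac{1-p}{4}, \tfrac{1-p}{4}, \tfrac{1-p}{4}\}$, and $N_W = \max(0, \tfrac{3p-1}{2})$, which fixes $\mathbb{C}_W = 2 - S(\rho_W)$ as another explicit increasing function of $N$. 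The claim is then that every admissible point $(N, \mathbb{C})$ coming from a rank-2 state lies in the band $\mathbb{C}_W(N) \le \mathbb{C} \le \mathbb{C}_p(N)$.

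The core of the argument is a constrained optimization: at fixed $N$, maximize and minimize $\mathbb{C} = 1 + H(r) - H(\mu)$ over all rank-2 states, which I would carry out with Lagrange multipliers enforcing normalization and $N = \text{const}$. For the upper bound I expect the stationarity and boundary conditions to drive $\mu \to 0$, so the extremal state purifies while keeping $N$ fixed; at $\mu = 0$ the reduced eigenvalue is forced to the pure-state value $r = \cos^2\alpha$ with $N = 2\cos\alpha\sin\alpha$, recovering exactly $\mathbb{C}_p(N)$. Concretely one shows $H(r) - H(\mu)$ is nondecreasing as the mixedness $\mu$ is removed at fixed $N$. The lower bound is the delicate part, because the Werner state simultaneously carries the largest receiver entropy $S(\rho^R) = 1$ and a large global entropy, so the minimum of $\mathbb{C}$ is an interior optimum reflecting a genuine trade-off rather than a corner of the parameter region.

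I expect the Werner (lower) bound to be the main obstacle. Establishing it amounts to showing that, at fixed negativity, the coherent-information-like quantity $S(\rho^R) - S(\rho^{SR})$ is minimized by the maximally symmetric configuration. The natural route is a symmetrization argument: averaging the state over the $U\otimes U^{*}$ twirl drives it toward the isotropic (Werner) family while controlling how the negativity and the two entropies change, so the twirled state has no larger capacity at a comparable negativity, and concavity of the von Neumann entropy would then pin the minimum to $\mathbb{C}_W(N)$. The technical difficulty is that twirling does not exactly preserve $N$, so the argument must either track the induced change in the negativity and compensate, or be replaced by a direct verification that the stationarity equations for the minimum are solved only by the Bell-diagonal state with maximally mixed marginals. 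Once both extremal states are identified, monotonicity of $\mathbb{C}_p$ and $\mathbb{C}_W$ in $N$ closes the envelope.
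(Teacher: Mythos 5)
Your reduction to $\mathbb{C}=1+H(r)-H(\mu)$ and your two reference curves coincide with the paper's starting point; the paper then closes the argument by writing an arbitrary rank-2 state in the explicit form of Eq.~(\ref{GR2}), computing the negativity and the capacity as functions of $(p_1,\theta_1,\theta_2)$, eliminating the pure-state (resp.\ Werner) parameter through the entanglement-matching conditions (\ref{relation}) and (\ref{relation2}), and verifying the resulting inequalities such as (\ref{ineq1}) numerically over the whole parameter range. Your proposal stops short of anything that could replace that step. The constrained optimization is never actually set up: at fixed negativity you must know which pairs $(r,\mu)$ are realizable by a rank-2 state, and characterizing that feasible region is exactly what the explicit parametrization supplies. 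Without it, ``stationarity drives $\mu\to 0$'' and ``one shows $H(r)-H(\mu)$ is nondecreasing as $\mu$ is removed at fixed $N$'' are restatements of the upper bound, not derivations of it.

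The lower bound is where the proposal is not merely incomplete but misdirected. First, the $U\otimes U^{*}$ twirl is an LOCC (indeed a mixture of local unitaries), so it can only decrease the negativity; since $\mathbb{C}_W$ is increasing in $N$, even a successful comparison $\mathbb{C}(\rho)\ge\mathbb{C}_W(N')$ with $N'\le N$ is strictly weaker than the claimed $\mathbb{C}(\rho)\ge\mathbb{C}_W(N)$ --- the mismatch points in the wrong direction and cannot be ``compensated.'' Second, twirling raises $S(\rho^{SR})$ by concavity while also pushing $S(\rho^{R})$ up to $1$, so the change in $S(\rho^{R})-S(\rho^{SR})$ has no definite sign and the premise that the twirled state has no larger capacity is unsupported. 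Third, the Werner state of Eq.~(\ref{Werner}) has rank $4$ for $p<1$, so it does not belong to the rank-2 family you are extremizing over: the minimum of your Lagrange problem cannot be ``solved only by'' the Bell-diagonal state with maximally mixed marginals, and the actual content of the theorem is that the rank-2 values stay strictly above that external curve. Finally, note that the paper itself does not finish analytically --- inequality (\ref{ineq1}) and its Werner analogue are checked numerically --- so completing your plan would require either reproducing that reduction-plus-numerics or supplying a genuinely new analytic argument for the two extremizations, neither of which is present here.
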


\begin{proof}
Any two-qubit mixed state of rank-2 can be expressed as \cite{WHSV}
\begin{equation}
\rho_2^2 = p_1 |\psi_1\rangle \langle\psi_1| + (1-p_1)|\psi_2\rangle \langle\psi_2|,
\label{GR2}
\end{equation}
where  $\: 0 <  p_1 < 1$,   $|\psi_1\rangle = |0\eta_1\rangle + |1\eta_2\rangle \:$, $\: |\psi_2\rangle = |0\eta_1^\perp\rangle + |1\eta_2^\perp\rangle$, $|\eta_1\rangle = \cos\frac{\theta_1}{2}|0\rangle + \sin\frac{\theta_1}{2}|1\rangle \:$ and $\: |\eta_2\rangle = \cos\frac{\theta_2}{2}|0\rangle + \sin\frac{\theta_2}{2}|1\rangle$ with $|\eta_1^\perp\rangle$ and $|\eta_2^\perp\rangle$ being orthogonal states to $|\eta_1\rangle$ and $|\eta_2\rangle$ respectively,
and  $0\leq \theta_i \leq \pi,\, i=1,2$.
The entanglement here is quantified by the negativity \cite{Vidal,negZ1,negZ2}  which is defined as  the sum of the modulus of negative eigenvalues of the partially transposed state. In this case, negativity  of $\rho_2^2$ in Eq. (\ref{GR2}) reads as
\begin{eqnarray}
\mathbb{N}_2^1 &=& |\frac{1}{4} \left[ \sqrt{x} - 2(1 - p_1)\right]| \label{e2}\\
\mathbb{N}_2^2 &=& |\frac{1}{4} \left[ \sqrt{x} - 2p_1\right]|, \label{e3}
\end{eqnarray}
where $x = 2 + 4p_1(p_1 + 1) + 2(2p_1 - 1)\cos(\theta_1 - \theta_2)$. Note that for a fixed \(p_1, \theta_i\, (i=1,2)\), 
$\mathbb{N} (\rho_2^2) = \text{max}\{0,\mathbb{N}_2^1,\mathbb{N}_2^2\}$. 

Let us first show the upper bound. The similar line of
proof leads to the lower bound. An arbitrary two-qubit pure state written in a Schmidt decomposition reads as
\begin{equation}
|\psi \rangle = \cos\frac{\theta}{2} |0_S 0_R \rangle + \sin\frac{\theta}{2} |1_S 1_R \rangle
\label{r1}
\end{equation}
where $|0_{S (R)}\rangle \text{ and } |1_{S (R)}\rangle$ are the eigenvectors of the reduced density matrices corresponding to the sender (receiver) and the eigenvalues of the local density matrix are $\cos^2\frac{\theta}{2} \; \text{and} \; \sin^2\frac{\theta}{2}$. The negativity of the pure state is the square root of the determinant of its reduced density matrix, i.e. $\sin\theta/2$. 
Equating entanglements of  rank-2 and pure state, we obtain
\begin{equation}
\theta = \sin^{-1}(2\mathbb{N}). 
\label{relation}
\end{equation}
On the other hand, the DCC of $\rho_2^2$ can be written as 
\begin{eqnarray}
\mathbb{C} (\rho_2^2) &=& 1 + H(\{\frac{1}{2}(1 - f_1(p_1)), \frac{1}{2}(1 + f_1(p_1))\}) \nonumber \\
&-& H(\{p_1,1 - p_1\}),
\end{eqnarray}
where $H(\{p_i\}) = -\sum_i p_i \log_2(p_i)$ is the Shannon entropy of the probability distribution $\{p_i\}$, and $f_1(p_1) = \frac{1}{2}(1 - 2p_1)\cos(\frac{\theta_1 - \theta_2}{2})$ while $\mathbb{C} (|\psi \rangle) = 1 + H(\{\cos^2(\theta/2), \sin^2(\theta/2)\})$. Due to Eq. (\ref{relation}), \(\mathbb{C} (|\psi \rangle)\) turns out to be  a function of $p_1, \; \theta_1, \; \text{and} \; \theta_2$ which can help to prove the statement of the theorem, i.e., by showing inequality given by
\begin{eqnarray}
&& H(\{\frac{1}{2}(1 - f_1(p_1)), \frac{1}{2}(1 + f_1(p_1))\}) - H(\{p_1,1 - p_1\}) \nonumber \\ 
&& - H(\{\cos^2\frac{\theta}{2}, \sin^2\frac{\theta}{2}\}) < 0 
\label{ineq1}
\end{eqnarray}
We substitute the value of $\theta$ in terms of $p_1, \; \theta_1, \; \text{and} \; \theta_2$ using Eq. (\ref{relation}),  and   numerically find that the inequality in  (\ref{ineq1}) holds true for all values of the above parameters.

In a similar fashion, we find that the negativity of the Werner state, $\rho_W$, is $\frac{(1 - 3p)}{4}$.  If entanglements of $\rho_2^2$ and $\rho_W$ are equal, we get
\begin{equation}
p = \frac{1 - 4 \mathbb{N}}{3}. 
\label{relation2}
\end{equation}
The DCC of $\rho_W$ reads $1 + 1 + H(\{\frac{1 + 3p}{4}, \frac{1 - p}{4}, \frac{1 - p}{4}, \frac{1 - p}{4}\})$, since the local entropy of the reduced system of the Werner state is unity. By using Eq. (\ref{relation2}), we again numerically establish that DCC of any rank-2 state is
always higher than that of the Werner state when both of them possess the same amount of entanglement for all values of \(p_1, \theta_1\) and \( \theta_2\).

Notice that although the proof is presented for real parameters, we observe that if \(|\eta_i\rangle,\, i=1, 2\) also have complex coefficients, the proof holds.
\end{proof}

\textit{Remark 1.} Numerically, we find that both the bounds remain true for all two-qubit states with rank-3 and -4. 

\textit{Remark 2.}
Our numerical observations show that even after pre-processing, our theorem holds (see the upper panel  in Fig. \ref{R2_cap_werner}). It implies that when  the receiver or both sender-receiver pair apply the local POVMs to activate the  dense coding capability of  shared states, the DCC of a random two-qubit state is still  lower bounded by that of the Werner state having the same value of initial entanglement. 
However, the upper bound does not hold any more under local POVMs.


\begin{figure}[!ht]
	\resizebox{7cm}{6cm}{\includegraphics{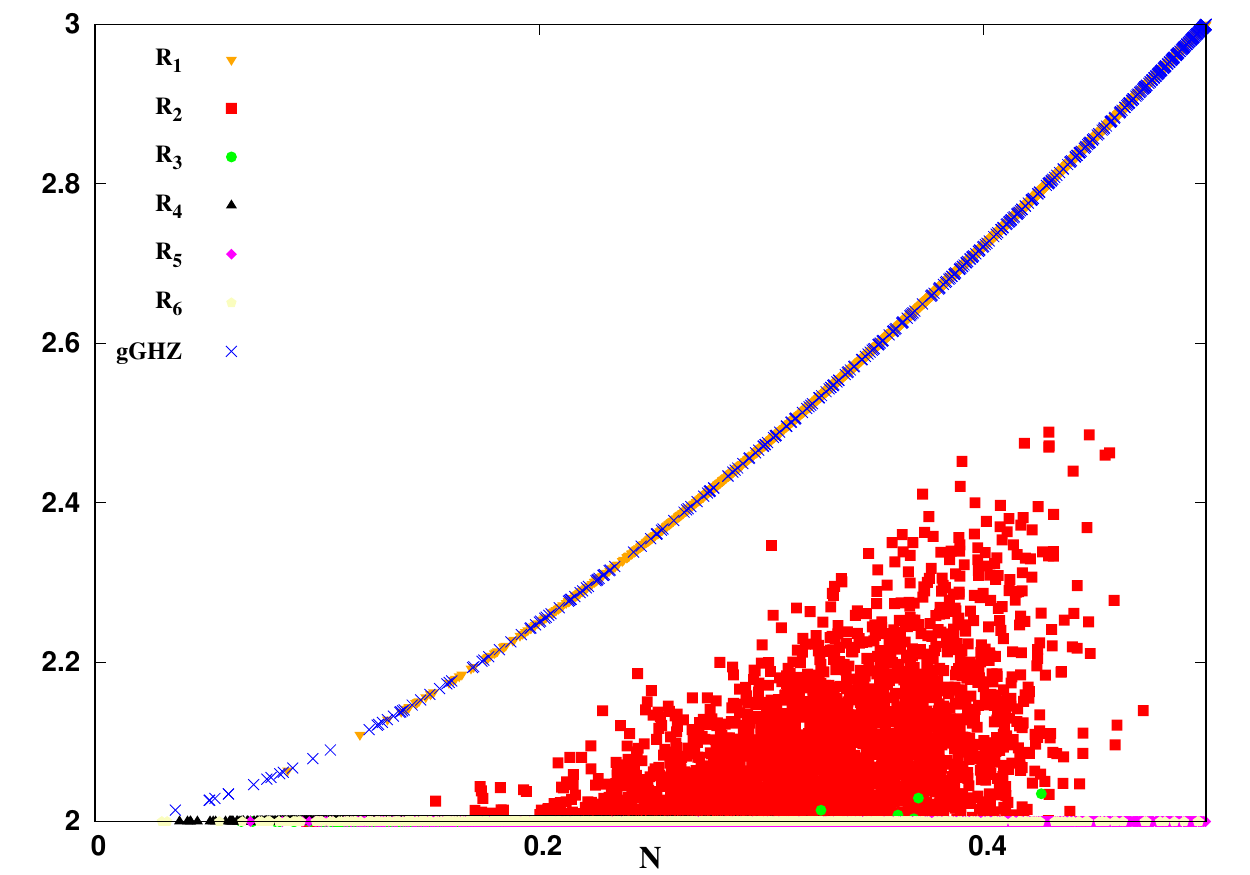}}
	\caption{(Color online) Dense coding capacity of Haar uniformly generated three-qubit states  (vertical axis) vs. negativity (horizontal axis) in the bipartition of senders and the receiver. Blue line represents the generalized GHZ state, \(|\phi_{gGHZ}\rangle\).   Subscripts, \(i\) (\( i=1,\ldots 6\)) of \(R_i\) denote the rank of the three-qubit states.  The vertical axis is in bits while the horizontal axis is in ebits.   }
\label{fig_GHZ3}
\end{figure}

\begin{theorem}
When negativities in the bipartition of senders and receivers of a three-qubit rank-2 state and the generalized GHZ state are equal,  the dense coding capacity of the latter is always higher than that of the former. 
\end{theorem}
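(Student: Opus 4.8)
The plan is to mirror the upper-bound half of Theorem 1, transplanted to the senders:receiver ($S_1S_2{:}R$) cut. First I would specialize the $NS$-$1R$ capacity of Eq.~(\ref{DCC_formula1}) to qubits, so that in the quantum regime $\mathbb{C}(\rho)=2+S(\rho^{R})-S(\rho^{S_1S_2R})$ (the classical case $\mathbb{C}=2$ is trivial, since any gGHZ with positive negativity already exceeds $2$). For $|\phi_{gGHZ}\rangle=\cos\frac{\alpha}{2}|000\rangle+\sin\frac{\alpha}{2}|111\rangle$, which is pure, the global entropy vanishes, and rewriting it as $\cos\frac{\alpha}{2}|00\rangle_{S_1S_2}|0\rangle_R+\sin\frac{\alpha}{2}|11\rangle_{S_1S_2}|1\rangle_R$ exhibits a Schmidt decomposition across $S_1S_2{:}R$ with coefficients $\cos\frac{\alpha}{2},\sin\frac{\alpha}{2}$. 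Hence its negativity in this cut is $\frac{1}{2}\sin\alpha$ and $\mathbb{C}(|\phi_{gGHZ}\rangle)=2+H(\{\cos^2\frac{\alpha}{2},\sin^2\frac{\alpha}{2}\})$, so the gGHZ plays precisely the role the pure state played in Theorem 1, now carrying an extra sender bit and with $\alpha$ in place of $\theta$.

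Next I would parametrize a general rank-$2$ three-qubit state. Because both the capacity and the negativity depend only on the $S_1S_2{:}R$ bipartition, I can regard $S_1S_2$ as one four-dimensional party and write $\rho=q|\Psi_1\rangle\langle\Psi_1|+(1-q)|\Psi_2\rangle\langle\Psi_2|$ with $|\Psi_1\rangle,|\Psi_2\rangle$ an orthonormal pair in $\mathbb{C}^4\otimes\mathbb{C}^2$ and $0<q<1$. The three ingredients entering the comparison are then the global entropy $S(\rho^{S_1S_2R})=H(\{q,1-q\})$, the receiver entropy $S(\rho^{R})=H(\{\mu,1-\mu\})$ read off from the $2\times2$ reduced state, and the negativity $\mathbb{N}(\rho)$, the sum of moduli of the negative eigenvalues of the partial transpose of $\rho$ over $R$. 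Equating negativities via $\frac{1}{2}\sin\alpha=\mathbb{N}(\rho)$, i.e. $\alpha=\sin^{-1}(2\mathbb{N})$, turns $\mathbb{C}(|\phi_{gGHZ}\rangle)$ into a function of the rank-$2$ parameters and reduces the claim to the single inequality
\begin{equation}
\Delta = H\!\left(\left\{\tfrac{1+\sqrt{1-4\mathbb{N}^2}}{2},\tfrac{1-\sqrt{1-4\mathbb{N}^2}}{2}\right\}\right)-S(\rho^{R})+S(\rho^{S_1S_2R}) > 0 .
\end{equation}
The structure of $\Delta$ already makes the mechanism transparent: the mixed state is penalized by the strictly positive term $S(\rho^{S_1S_2R})=H(\{q,1-q\})$ that the pure gGHZ avoids, so the content of the theorem is that this penalty, together with the constraint tying $\mathbb{N}(\rho)$ to $S(\rho^R)$, suffices to dominate any excess receiver entropy the mixed state may carry.

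I would then settle the inequality exactly as Theorem 1 was settled: verify $\Delta>0$ over the whole (bounded, after fixing the local-unitary gauge) parameter range by numerically maximizing $S(\rho^{R})-S(\rho^{S_1S_2R})$ at fixed $\mathbb{N}$ and checking that the maximum never attains the gGHZ value, and finally confirming that allowing complex coefficients in $|\Psi_1\rangle,|\Psi_2\rangle$ does not change the conclusion. I expect the main obstacle to be the negativity step: for a rank-$2$ state on $\mathbb{C}^4\otimes\mathbb{C}^2$ the partial transpose is an $8\times8$ matrix whose negative eigenvalues have no usable closed form, so $\mathbb{N}(\rho)$ cannot be inverted analytically and its link to $(\mu,q)$ stays implicit. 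This is precisely why a fully analytic argument is out of reach and why the final inequality must be certified numerically; the accompanying care is to ensure that, once the $U(4)\times U(2)$ local-unitary freedom is fixed, the chosen parametrization genuinely exhausts all rank-$2$ three-qubit states, so that the numerical sweep is complete rather than merely representative.
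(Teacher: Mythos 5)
Your proposal follows essentially the same route as the paper: write the gGHZ state in Schmidt form across the $S_1S_2{:}R$ cut so that its negativity is $\tfrac{1}{2}\sin\theta_g$ (your $\alpha=\sin^{-1}(2\mathbb{N})$ is equivalent to the paper's $\theta_g=\tfrac{1}{2}\cos^{-1}(1-8\mathbb{N}^2)$), represent the rank-2 state as a mixture of two orthonormal pure states so that $S(\rho^{S_1S_2R})=H(\{q,1-q\})$, equate negativities, and reduce the claim to a Shannon-entropy inequality that is then certified numerically over the whole parameter range. The only substantive difference is parametrization: the paper uses an explicit real-angle form that yields closed-form expressions for the negativity and for $S(\rho^R)$ before the numerical sweep, whereas you keep the $\mathbb{C}^4\otimes\mathbb{C}^2$ description abstract and must evaluate the negativity numerically as well — the logical skeleton and the final numerical verification are the same.
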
 

\begin{proof}
Any three-qubit rank-2 state, shared between \(S_1 S_2 R\) can be written as \cite{WHSV}
\begin{equation}
\rho_3^2 = p_2 |\psi_3\rangle \langle\psi_3| + (1-p_2)|\psi_4\rangle \langle\psi_4|,
\label{TR2}
\end{equation}
 where $0 <  p_2 < 1$, and  $|\psi_3\rangle = |0\eta_3\rangle + |1\eta_4\rangle \:$, $\: |\psi_4\rangle = |0\eta_3^\perp\rangle + |1\eta_4^\perp\rangle$, $|\eta_3\rangle = |0 \eta_3'\rangle + |1\eta_4'\rangle \:$ and $\: |\eta_4\rangle = |0\eta_3''\rangle + |1\eta_4''\rangle$, with  $|\eta_3'\rangle = \cos \frac{\theta_3}{2}|0\rangle + \sin\frac{\theta_3}{2} |1\rangle,  |\eta_4'\rangle = \cos\frac{\theta_4}{2}|0\rangle + \sin\frac{\theta_4}{2} |1\rangle, |\eta_3''\rangle = \cos\frac{\theta_3'}{2}|0\rangle + \sin\frac{\theta_3'}{2} |1\rangle, |\eta_4''\rangle = \cos\frac{\theta_4'}{2}|0\rangle + \sin\frac{\theta_4'}{2} |1\rangle$ with  $|\eta_3^\perp\rangle$ and $|\eta_4^\perp\rangle$ being orthogonal states to $|\eta_3\rangle$ and $|\eta_4\rangle$ respectively.
 Here $0\leq \theta_i$, $\theta_i' \leq \pi, \, i=3,4$. 
The entanglement in terms of negativity \cite{Vidal}  of  rank-2 state in  the \(S_1S_2:R\) bipartition is
\begin{eqnarray}
  	(\mathbb{N}_3^2)^1 &=& \frac{1}{8}\left[\sqrt{12-24p_2 +16 p_2^2-(1-2p_2)y} - 4p_2\right], \nonumber\\
	&\text{if }& p_2 < 0.5 \label{e31} \\
 	(\mathbb{N}_3^2)^2 &=& \frac{1}{8} \left[ \sqrt{4-8p_2+16p_2^2 + (1-2p_2)y}+4p_2 -4 \right], \nonumber \\
	&  \text{if }& p_2 > 0.5 \label{e32}
 \end{eqnarray}
where $y = \cos(\theta_3 - \theta_3') + \cos(\theta_3 - \theta_4') + \cos(\theta_3' - \theta_4') + \cos(\theta_3 - \theta_4) + \cos(\theta_3' - \theta_4) + \cos(\theta_4' - \theta_4)$ and hence  $\mathbb{N}(\rho_3^2) = \max \{0, (\mathbb{N}_3^2)^1, (\mathbb{N}_3^2)^2 \}$ quantifies the negativity of $\rho_3^2$.
For the three-qubit generalized GHZ (gGHZ) state,
\begin{equation}
|\phi_{gGHZ} \rangle = \cos(\theta_g/2) |0_{S_1}0_{S_2}0_R\rangle + e^{i \phi_g} \sin(\theta_g/2)|1_{S_1}1_{S_2}1_{R}\rangle
\label{gGHZ}
\end{equation}
where $0 < \theta_g < \frac{\pi}{2}$ and \(0 \leq \phi_g \leq \pi\), 
we have \( \mathbb{N} (|\phi_{gGHZ} \rangle ) = \frac{\sqrt{1- \cos 2 \theta}}{2 \sqrt{2}}\). When the entanglements of both the three-qubit rank-2 and the gGHZ states coincide,  we find
\begin{equation}
	\theta_g = \frac{\cos^{-1} (1-8\mathbb{N}_3^2)}{2}
	\label{p_2relation}
\end{equation}
On the other hand, 
\begin{eqnarray}
	\mathbb{C}(\rho_3^2) &=& 2 + H(\{\frac{1}{2}(1 - f_2(p_2)), \frac{1}{2}(1 + f_2(p_2)) \} ) \nonumber \\
	&-& H(\{ p_2, 1 - p_2\}) 
\end{eqnarray}
where $f_2(p_2) = \frac{1}{2\sqrt{2}}\sqrt{(1-2p_2)^2 (2+y)}$ while  $\mathbb{C}(|\phi_{gGHZ}\rangle) = 2 + H(\{ \cos^2(\theta_g/2), \sin^2(\theta_g/2) \})$. 
Mathematically, the statement of the theorem requires the following inequality to hold
\begin{eqnarray}
       &&	H(\{\frac{1}{2}(1 - f_2(p_2)), \frac{1}{2}(1 + f_2(p_2)) \} ) - H(\{ p_2, 1 - p_2\}) \nonumber \\
	&&   - H(\{ \cos^2(\theta_g/2), \sin^2(\theta_g/2) \}) < 0.
	\label{gGHZ_in}
\end{eqnarray}
We substitute the value of $\theta_g$ in terms of $p_2, \theta_3, \theta_4, \theta_3'$ and $\theta_4'$ using the relation, given in Eq. (\ref{p_2relation}) and we numerically find that for all values of the above parameters, the inequality (\ref{gGHZ_in}) holds true.

\end{proof}  


\textit{Remark 1. } Like in two-qubit states,  we also observe that the DCC of other mixed states of  rank$\geq$3 are also upper bounded by the DCC of the gGHZ state as shown in Fig. \ref{fig_GHZ3}. 

\textit{Remark 2.} Our numerical results show that after pre-processing, some of the rank-2 states have  higher  DCC than that of the gGHZ state when both of them possess the same amount of initial entanglement. Hence, our theorem does not hold when the senders and the receiver apply local POVMs.

\subsection{Analytical expression for mean DCC}
\label{sec:mean}

Let us now derive  the analytical expression of the mean dense coding capacity of Haar-uniformly generated two-qubit states of different ranks in the 1S-1R scenario. These analytical expressions match significantly well with our numerical results as obtained from numerical data in Fig.  \ref{fig:distriDCC}.
    	From Eq. (\ref{DCC_formula1}), the mean DCC for random two-qubit states can be rewritten as
	\begin{eqnarray}
	\langle \mathbb{C}^{1S-1R} (\rho^{SR}) \rangle &=& 1 + \langle S(\rho^{R}) - S(\rho^{SR})\rangle. 
	\label{meanDCC}
	\end{eqnarray}
	The mean entropy of a subsystem of dimension \(M\), which is obtained through partial tracing from a pure state of dimension \(MK\),  
	can be expressed as \cite{Kendon}
		\begin{equation}
	\langle S_M \rangle \approx \log_2 M - \frac{M}{2K}.
	\label{eq:wholeen}
	\end{equation}
	For arbitrary two-qubit states, $\rho^{SR}$, $M = 4$. Depending upon the rank of the system, \(K\)  can take value \(1 \ldots 4\) for states with  rank-\(1, \ldots 4\) respectively. In order to calculate $\langle S(\rho^{R})\rangle$ for the reduced state, we use the principle of purification of mixed states \cite{Hall} according to which a mixed state is obtained from a higher dimensional pure state after tracing out appropriate subsystems of a pure state.  
	If we assume that a  \(N+2\)- dimensional pure state leads to a single qubit state, the average entropy was found to be \cite{Karol} 
	\begin{eqnarray}
	\langle S(\rho) \rangle = &&\frac{\log_2 e}{4^{N-1}} \frac{(2N - 1)!}{(N-2)! (N-1)!} \nonumber \\
	&&\sum_{s=0}^{N-2} \binom{N-2}{s} \frac{(-1)^s}{(s + 2)(2s + 3)} \sum_{t = 0}^{s+1} \frac{1}{2t + 1}
	\label{meanER}
	\end{eqnarray}
	Let us take $N = \frac{\mbox{dimension \: of \: initial \: pure \:  state}}{2}$. The results for different ranks are enumerated below. 
	
	\begin{table}[]
	\resizebox{0.5\textwidth}{!}{\begin{minipage}{0.7\textwidth}
			\caption{  Comparison between analytical  ( in Eq. (\ref{meanER})) and numerical values of $\langle \mathbb{C}^{SR} \rangle $.} 
			\label{tabmean}
			\centering
			\begin{tabular}{|r|r|r|r|r|}
				\hline
				\multicolumn{1}{|c|}{Rank} & \multicolumn{3}{c|}{Analytical}                                                            & \multicolumn{1}{c|}{Numerical} \\ \hline
				\multicolumn{1}{|c|}{}     & \multicolumn{1}{c|}{N = 2} & \multicolumn{1}{c|}{N = dim/2} & \multicolumn{1}{c|}{N = 100} & \multicolumn{1}{c|}{}          \\ \hline
				2                          & 0.481                      & 0.735                          & 1                            & 1                              \\ \hline
				3                          & 0.15                       & 0.489                          & 0.667                        & 0.711                          \\ \hline
				4                          & 0                          & 0.366                          & 0.5                          & 0.536                          \\ \hline
			\end{tabular}
	\end{minipage}}
\end{table}

\begin{enumerate}
	\item \textit{Pure states.} In this case, $\langle S(\rho^{SR})\rangle = 0$ and $\langle S(\rho^{R})\rangle = 0.5$ which gives $\langle \mathbb{C}_1^{SR} \rangle = 1.5$ The subscript in \(\mathbb{C}_1^{SR}\) denotes the rank of the state. 

	\item \textit{Rank-2 states.} $\langle S(\rho^{SR})\rangle = 1$ and  $\langle S(\rho^{R}) \rangle$ is calculated using $N = 4$ in the formula (\ref{meanER}) giving the value 0.735. Hence $\langle \mathbb{C}_2^{SR} \rangle = 	0.735$.
	
	\item \textit{Rank-3 states. } By using Eqs. (\ref{eq:wholeen}) and (\ref{meanER}) with $N=6$, we get $\langle S(\rho^{SR})\rangle = \frac{4}{3}$ and  $\langle S(\rho^{R}) \rangle = 0.822$ which leads to the mean DCC  as $\langle \mathbb{C}_3^{SR} \rangle = 0.489$.
	
	\item \textit{Rank-4 states.} In this case, $\langle \mathbb{C}_4^{SR} \rangle = 0.366$ since $\langle S(\rho^{SR})\rangle = \frac{3}{2}$ and  $\langle S(\rho^{R}) \rangle = 0.866$ which is obtained by using $N=8$. 
\end{enumerate}

\emph{Remark 1.} The average of DCC obtained in the case of two-qubit states having rank-2, -3, -4 is below unity which implies that most of the states do not give quantum advantage in the dense coding protocol and hence average DCC  decreases with the rank  (cf. \cite{Rivurecent}). \\
 
\emph{Remark 2.} Notice that the mean obtained by analysing the  frequency distributions of DCC in Fig. \ref{fig:distriDCC}  is much higher than the one reported above as also shown in Table \ref{tabmean}.   We find that if we increase the dimension of the composite system, \( N\), the analytical results match pretty well with the numerics. In fact with N = 100, the analytical and numerical results are in good agreement (see Table \ref{tabmean}). 




\section{Effects of local pre-processing on the dense coding capacity of Haar uniformly generated states}
\label{sec:DCCafter}

The dense coding capacity, given in Eqs. (\ref{DCC_formula1}) and (\ref{DCC_formula2}), are obtained by optimizing  over the unitary encoding performed by the sender(s) and the decoding by the receiver(s).  However, it is expected that before starting the DC protocol,  if one includes preprocessing on the shared states between the sender(s) and the receiver(s), the capacity can, in general, be enhanced with a certain probability. Since we deal with random states, and our aim is to find out the effects of preprocessing on random states, we illustrate by analyzing the situation where some of the senders and receivers or all of them  apply the local dichotomic POVMs ( in Eq. (\ref{POVM_i}) ) to activate the hidden DCC (to enhance DCC) when a particular choice of outcomes occur. 
To that end, we try to derive analytical conditions, which when satisfied, ensure that the state can exhibit enhanced DCC after pre-processing by POVMs.
 Let us define the following figures of merit to monitor the action of pre-processing operations on DCC.\\

\textit{Optimal increase in dense coding capacity  (via POVM).} After maximizing over all the parameters involved in local POVM, we concentrate on the DCC of the resulting state which is obtained when a specific measurement outcome clicks. 
The  maximization is performed when POVM is performed by the sender(s) or the receiver(s) or both. We define the optimal increase in DCC due to the action of POVM by all the parties as
\begin{equation}
	\mathbb{O}_{DCC}= \underset{\{E_i^{o_i}\}}{\text{max }} \mathbb{C}\left( \frac{(\otimes \sqrt{E_{i}^{o_i}}) \rho (\otimes \sqrt{E_{i}^{o_i^{\dagger}}})}{\text{tr}\left[(\otimes \sqrt{E_{i}^{o_i}}) \rho (\otimes \sqrt{E_{i}^{o_i^{\dagger}}})\right] }\right), 
	\label{Optimumcost_DCC}
\end{equation}
where the numerator denotes the output state, $\text{tr}\left[(\otimes \sqrt{E_{i}^{o_i}}) \rho (\otimes \sqrt{E_{i}^{o_i^{\dagger}}})\right]$ is the probability of obtaining the outcome to normalize the state,  and\(\{o_i\}\) represents the particular outcome that gives the maximal DCC. In case, some of the parties perform POVMs, we apply the identity operator on the rest as mentioned in Eq. (\ref{POVM_state1}). 
Although it may occur that several sets maximize the capacity, in a realistic situation, POVM is set to the  optimal direction so that  one of the possible choices of outcomes can  occur. The enhancement can be measured by evaluating \(\mathbb{O}_{DCC} - \mathbb{C}(\rho)\), with \(\mathbb{C}(\rho)\) being the DCC of the original state before application of POVM. 

\textit{Average cost of optimum dense coding capacity.}  Let us suppose that an outcome, \(o_i\), of a particular POVM gives the maximal enhancement in the capacity of dense coding. The average cost of optimum dense coding capacity is then defined as 
\begin{equation}
	\mathbb{A}_{DCC}^{1} = \sum_{\{o_i\}} p_{o_i} \mathbb{C}(\rho_{o_i}),
	\label{averagecost1_DCC}
\end{equation}
where $p_{o_i}$ is the probability of occurrence of a particular outcome of POVM, \(o_i\) and $\mathbb{C} (\rho_{o_i})$ represents the DCC of the normalized state after the action of POVM for that particular outcome, \(o_i\).  Note that the DCC for other outcome choices is calculated with the same choices of parameters in POVM  which leads to the maximum increase in DCC calculated in Eq. (\ref{Optimumcost_DCC}). 

\textit{Maximal cost of average dense coding capacity.} 
After performing local POVMs by the  sender(s) and the receiver(s), if we are interested to know the maximum enhancement that can occur in the dense coding protocol on average, we can evaluate the quantity, given  by 
%
\begin{equation}
	\mathbb{A}_{DCC}^{2} = \underset{\{\lambda_i, \theta_i, \phi_i\}}{\text{max}} \left[ \sum_{o_i} p_{o_i} \mathbb{C}(\rho_{o_i}) \right], 
	\label{averagecost2_DCC}
\end{equation}
where the maximization is performed over the set of  parameters in POVM as given in Eq. (\ref{POVM_i}), $p_{o_i}$ is the probability of occurrence of a particular outcome and $\rho_{o_i}$ is the normalized state after the action of pre-processing for that particular outcome.
 Notice that in the case of average cost of optimum dense coding capacity, we perform maximization to identify a single outcome that gives maximum DCC after POVM while in this case, maximization is performed to optimize the entire quantity which is written in the box parenthesis. 

Based on the above three quantities, we now analyze the consequence of pre-processing acted by different combinations of the sender(s) and the receiver(s) mentioned before on the dense coding. Similar quantities will also be considered for teleportation where capacities will be replaced by fidelities.

\subsubsection{Random two-qubit states after POVM: A single sender and a single receiver scenario}

In this scenario, a sender, $S$, and a receiver, $R$, share a two-qubit random state, \(\rho^{SR}\) having different ranks. When the shared state is pure, we know that whenever the state is entangled, it is dense codable and hence the hidden DCC cannot be revealed after POVM although POVM can enhance the dense coding capability of the shared pure state.  On the other hand, if the shared state is a two-qubit mixed state, we find that the mean DCC is below unity, as shown in Sec. \ref{sec:mean}, thereby implying that most of the Haar uniformly generated states do not show a quantum advantage in DC. For these states, either the sender or the receiver or both of them apply the pre-processing operations to extract the hidden DCC. We now present the exact conditions (in terms of eigenvalues of the shared state and reduced state before and after preprocessing) which have to be satisfied by the rank-2 mixed states  for extracting the hidden DCC. It is important to mention here that when the pre-processing is  completely positive trace preserving  (CPTP) map which can be included in the encoding-decoding process, it was shown  that DCC can be enhanced by applying CPTP operations neither by the sender nor by the receiver \cite{Horodecki_2012}. 

\textit{Rank-2  state in 1S-1R scenario.} Let the shared state, $\rho^{SR}$, be a rank-2 state. We denote its eigenvalues by $x_1$ and $x_2$, with $x_1 + x_2 = 1$ and $x_2 - x_1 = k_0$ while the reduced state at the receiver's side, $\rho^{R} = \text{tr}_{S}(\rho^{SR})$, have eigenvalues $x'_1$ and $x'_2$ whose sum is still unity and difference is taken as $x'_2 - x'_1 = k'_0$. Obviously, since all the eigenvalues are positive, both $k_0, k'_0 <1$. The DCC (before pre-processing) expression can then easily be written in terms of the sum and difference of the eigenvalues of the density matrices as 
	\begin{eqnarray}
	&&\mathbb{C}(\rho^{SR}) =  1 \nonumber \\
	&& 	- ((1 - k'_0)\log_2(1 - k'_0)	+  (1 + k'_0)\log_2(1 + k'_0)) \nonumber \\
	&&+ ((1 - k_0)\log_2(1 - k_0) + (1 + k_0)\log_2(1 + k_0)).
	\end{eqnarray}
After pre-processing  has been applied, the resulting state is $\rho_p^{SR}$ with eigenvalues summing to unity and having $k$ as their difference. Similarly, for the reduced state after pre-processing ($\rho_p^{R}$), the sum of eigenvalues is unity, but their difference is $k'$ 
and $0 \leq k,k' \leq 1$. In a similar spirit as above, we can write the dense coding capacity after pre-processing in terms of $k$ and $k'$ as
	\begin{eqnarray}
	\label{eq:afterpre}
	&& \mathbb{C}(\rho_p^{SR})  =   1 + S(\rho_p^{R}) - S(\rho_p^{SR})  = 1\nonumber \\
	&&- ((1 - k')\log_2(1 - k')	+  (1 + k')\log_2(1 + k')) \nonumber \\
	&&+ ((1 - k)\log_2(1 - k) + (1 + k)\log_2(1 + k)).
	\end{eqnarray}
It is straightforward to show that each entropy term, \(S(\rho^R)\), \(S(\rho_p^R)\), \(S(\rho^{SR})\), and \(S(\rho_p^{SR})\),     in both $\mathbb{C}(\rho^{SR})$ and $\mathbb{C}(\rho_p^{SR})$, reaches their individual  maximum values when $k_0,k'_0,k, k' $ are all vanishing. 
The DCC after pre-processing should possess two properties --  the DCC after pre-processing is in the quantum region, i.e., $\mathbb{C}(\rho_p^{SR}) > 1$; and the DCC after pre-processing is greater than that of before, i.e., $\mathbb{C}(\rho_p^{SR}) > \mathbb{C}(\rho^{SR})$.
We derive conditions for both these traits and argue whether both are necessary for a given rank or if we can work with either of them. 

\emph{ Condition for non-classical DCC after pre-processing.} This condition demands that $S(\rho_p^{R}) - S(\rho_p^{SR}) > 0$. Since both these terms increase when their  differences, i.e., $k,\, k'$ approach to zero, and since after pre-processing we find numerically that the differences decrease i.e. $k<k_0$ and $k'<k'_0$, we propose the following: \\

\textbf{Proposition 1.} \textit{The dense coding capacity after pre-processing is non-classical i.e. $S(\rho_p^R) - S(\rho_p^{SR}) > 0$, when $k'$ is smaller than $k$, i.e. $k'<k$.} \\

The above condition follows from Eq. (\ref{eq:afterpre}) and  guarantees that $\mathbb{C}(\rho_p^{SR}) >1$ although it does not ensure enhancement after pre-processing. States which satisfy it after pre-processing, will surely have non-classical DCC and vice-versa. We see that this condition involves two eigenvalues and it may seem that it holds only for rank-2 states. However, our numerical analysis suggests that some rank-3 and  -4 states after optimal POVMs are reduced to states with rank-2, and hence this condition is true for such two-qubit mixed states as well.

\emph{ Condition for enhancing DCC after pre-processing. } This condition demands that $S(\rho_p^{R}) - S(\rho_p^{SR}) > S(\rho^{R}) - S(\rho^{SR})$ which in turn 
depends on the changes occurred in the difference of eigenvalues  before and after the pre-processing. In particular, we observe the following after POVM:

\textbf{Proposition 2.} \textit{The dense coding capacity of rank-2  two-qubit states after pre-processing is greater than that of the state without pre-processing, if 
 $(k'_0 - k')>(k_0 - k)$}.\\

As noticed, when the difference between eigenvalues of the states  vanishes,  the  individual entropies are maximized Therefore, we can get enhancement after pre-processing,  if  the rate in which $k'$ in $\rho_p^{R}$ goes closer to zero after changing from $k'_0$ to $k'$  is higher than $k$ in  $\rho_p^{SR}$ which changes from $k_0$ to $k$, then the increase in $S(\rho_p^{R})$ (from $S(\rho^{R})$) is greater than the increase in $S(\rho_p^{SR}$ (from $S(\rho^{SR})$), thereby implying $\mathbb{C}(\rho_p^{SR}) > \mathbb{C}(\rho^{SR})$. We observe that among randomly generated two-qubit rank-2 states, $80.04\%$ states to satisfy the above condition, although there are states, showing the advantage of pre-processing which do not satisfy the above criteria. 




	
\begin{figure}[!ht]
	\resizebox{9cm}{8cm}{\includegraphics{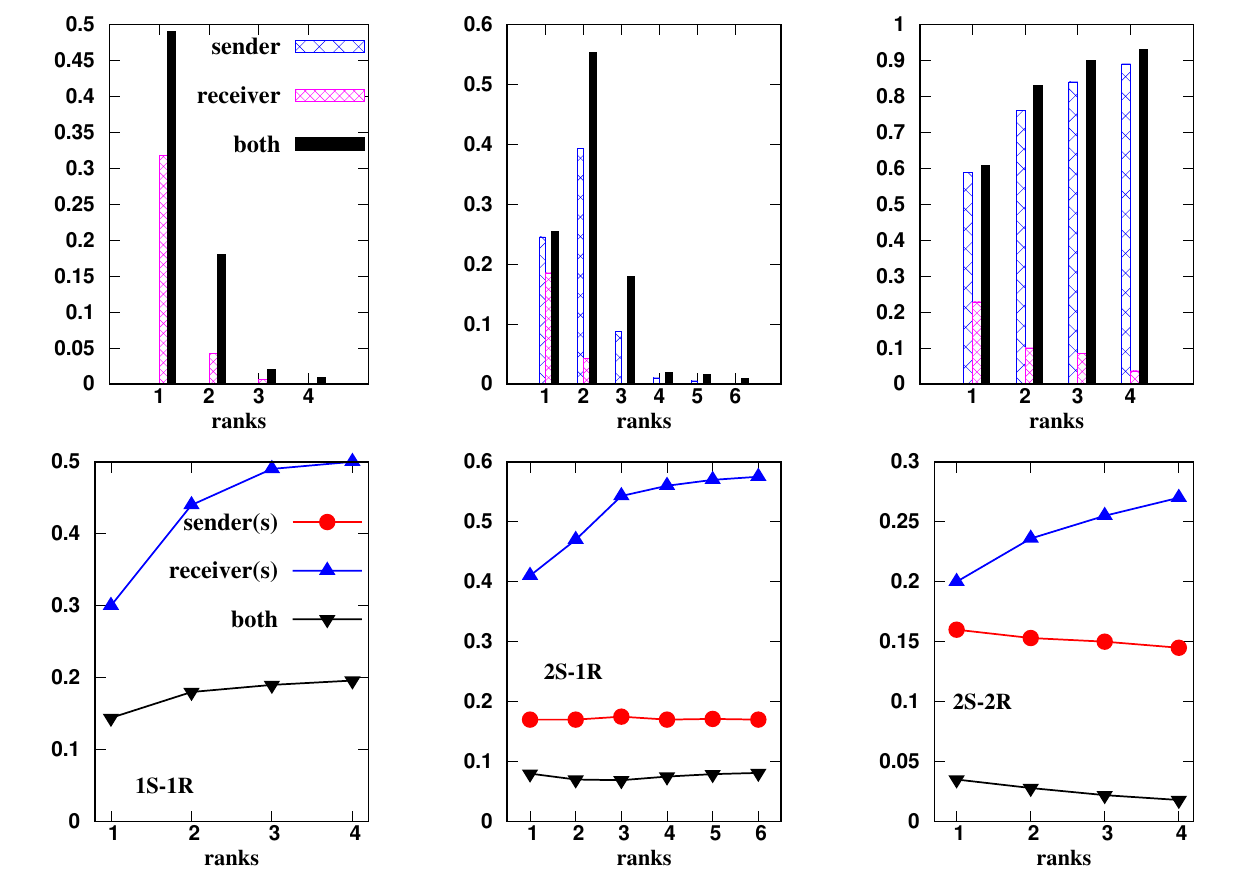}}
	\caption{ (Color online)  Upper Panel. Mean optimal increase in dense coding capacity, \(\overline{\mathbb{O}}_{DCC}\),  (ordinate)  in 1S-1R, 2S-1R and 2S-2R scenarios with varying rank of the shared state (abscissa).  Lower Panel. Average probability of obtaining optimal increase in DCC, \(\overline{p}_{\mathbb{O}_{DCC}}\), in  1S-1R, 2S-1R, and 2S-2R DC protocol  with ranks. All the axes are dimensionless. }
\label{fig:maxprobDCC}
\end{figure}

Let us now move to the scenario where two-qubit Haar uniformly generated states undergo pre-processing and we first address the issue of activation of hidden DCC, with the increase of ranks and with the number of parties doing POVM. 

\begin{table}[]
	\resizebox{0.5\textwidth}{!}{\begin{minipage}{0.7\textwidth}
			\caption{Average optimal increase in DCC (Two-qubits). $\overline{\mathbb{O}}_{DCC}$   denotes the optimal increase in DCC on average. "Both" and  "Receiver"  indicate that POVM is performed both by the sender, as well as the receiver, and the receiver only. ``Before'' represents the mean DCC for a given rank without local POVM.}
			\label{2DCCmax}
			\centering
			\begin{tabular}{|l|l|l|l|}
				\hline
				& \multicolumn{1}{c|}{Before} & \multicolumn{1}{c|}{Receiver}                                                                                & \multicolumn{1}{c|}{Both}                                                                                    \\ \hline
				& \multicolumn{1}{c|}{}       & \multicolumn{1}{c|}{}                                                                                        & \multicolumn{1}{c|}{}                                                                                        \\ \hline
				& \multicolumn{1}{c|}{}       & \multicolumn{1}{c|}{$\overline{\mathbb{O}}_{DCC}$} & \multicolumn{1}{c|}{$\overline{\mathbb{O}}_{DCC}$} \\ \hline
				&                             &                                                                                                              &                                                                                                              \\ \hline
				Rank-1 & 1.48                        & 1.8                                                                                                          & 1.97                                                                                                         \\ \hline
				Rank-2 & 1.07                        & 1.11                                                                                                         & 1.25                                                                                                         \\ \hline
				Rank-3 & 1.0043                      & 1.01                                                                                                         & 1.034                                                                                                        \\ \hline
				Rank-4 & 1.00026                     & 1.002                                                                                                        & 1.01                                                                                                         \\ \hline
			\end{tabular}
	\end{minipage}}
\end{table}


\begin{enumerate}
\item \textit{Effects of rank.} As depicted in Fig. \ref{fig:distriDCC},  the DCC of  most of the mixed random states lies just above the classical limit if they have non-classical DCC and the percentage of states that have non-classical DCC decreases sharply with increasing rank. 

First of all, we notice that if POVMs are performed by the sender, no increment in DCC is observed for two-qubit states (cf. \cite{Horodecki_2012}). \\

Secondly,  after POVM, the \emph{optimal increase in DCC}  shows a rise on average (see Fig. \ref{fig:maxprobDCC}), albeit with a finite probability. In Table \ref{2DCCmax} and Fig. \ref{fig:maxprobDCC}, we illustrate \(\overline{\mathbb{O}}_{DCC} = \frac{\sum \mathbb{O}_{DCC} (\rho^{SR})}{N_S}\) and  \(\overline{p}_{\mathbb{O}_{DCC}} = \frac{\sum p_{\mathbb{O}_{DCC}} (\rho^{SR})}{N_S}\), where \(N_S\) is the total number of states simulated and \(p_{\mathbb{O}_{DCC}} (\rho^{SR})\) is the probability of obtaining the outcome of the POVM which leads to the state having maximum increase in DCC.  The increment and corresponding probability are complimentary to each other i.e. more increment occurs with lesser probability as it is visible from the upper and lower panels of Fig. \ref{fig:maxprobDCC}. It is true that since most of the rank-4 and above randomly generated states without pre-processing are not advantageous for quantum DC, after pre-processing, the increase is also very low on average. 

\begin{figure}[!ht]
	\resizebox{9cm}{8cm}{\includegraphics{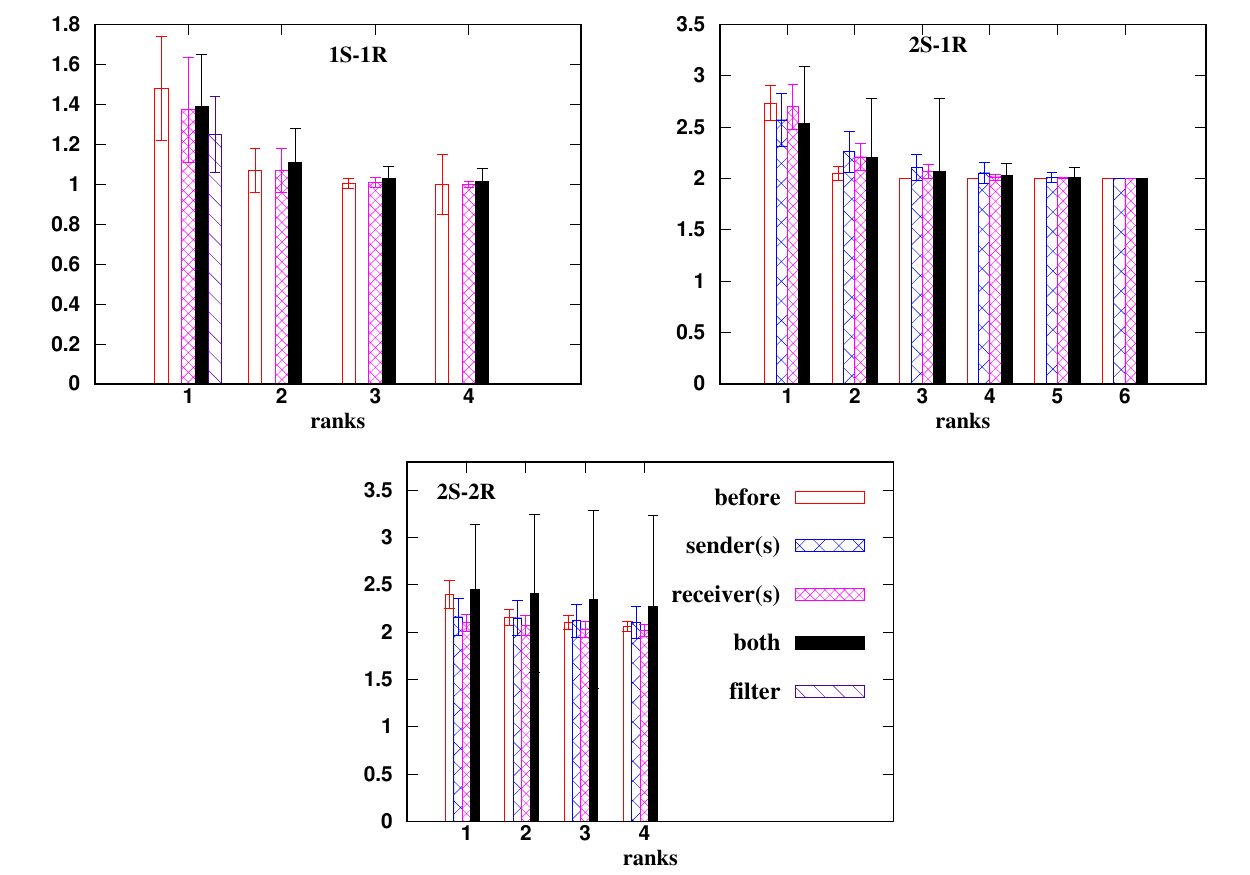}}
	\caption{\footnotesize (Color online)  Upper Panel. The mean and standard deviation (shown as error bars) of average cost of optimum dense coding capacity, \(\overline{\mathbb{A}}_{DCC}^{1}\), for randomly generated states  against the rank of the state in 1S-1R, 2S-1R. Lower Panel. Same quantity  is plotted for two senders -two receivers. All the axes are dimensionless.}
\label{fig:meanplotDCC}
\end{figure}

To analyse the average cost of optimum dense coding capacity and maximal cost of average DCC, we evaluate the mean and the standard deviation (SD) of these quantities for randomly generated two-qubit states. We observe that although POVMs  by the receiver or  both by the sender and the receiver do not help to increase the mean and the SD of these quantities for pure states, the pre-processing indeed enhances the capability of showing quantum advantages in the dense coding protocol in  states with rank-2 and above as shown in the left columns of the upper panel in Figs.  \ref{fig:meanplotDCC} and \ref{fig:newmeanplotDCC}  as well as in Table \ref{meanDCCtab}. When both the parties apply local POVMs, we observe that SD of \(\mathbb{A}_{DCC}^{1}\) decreases with rank also although the SD obtained from the frequency distribution of DCC before pre-processing is lower than that of quantities after POVM. 
%
%

\begin{figure}[!ht]
	\resizebox{9cm}{8cm}{\includegraphics{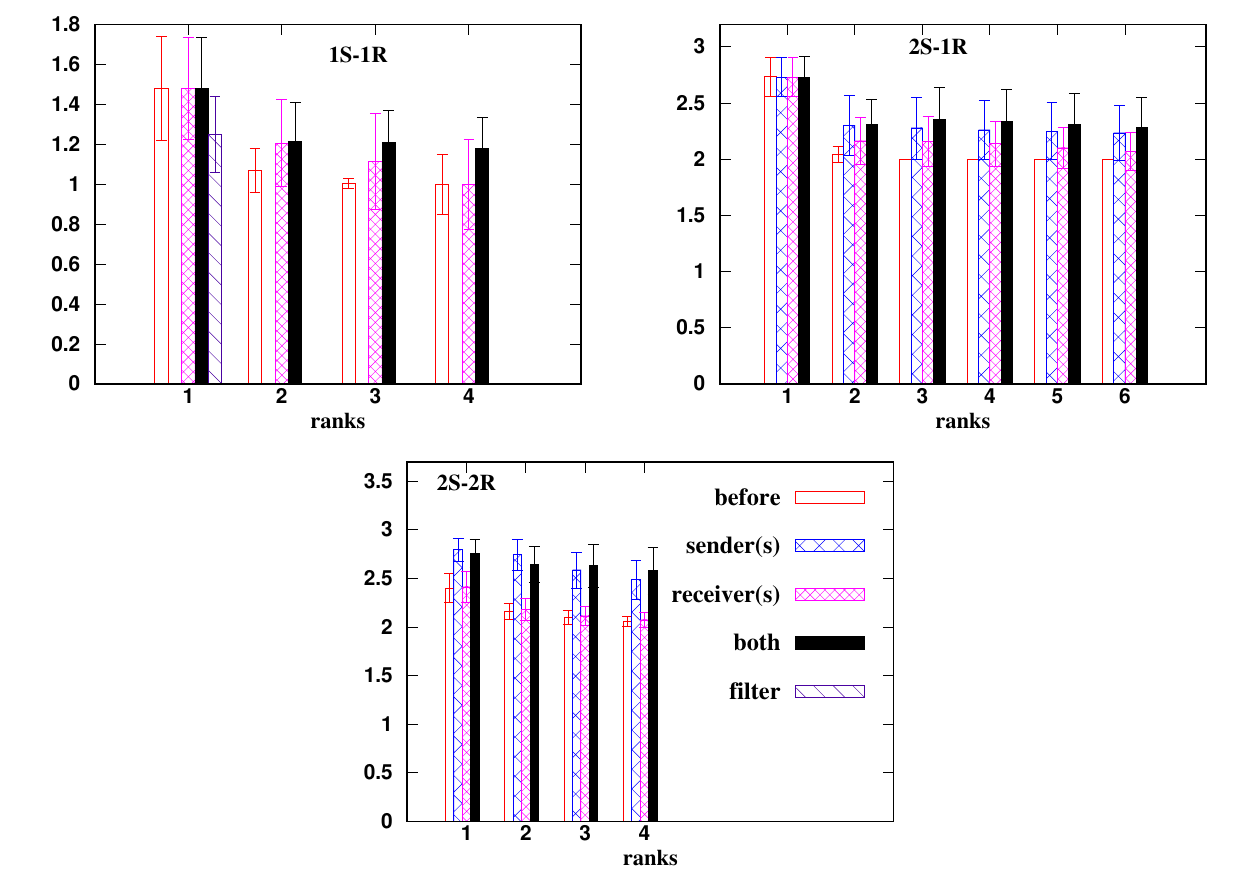}}
	\caption{\footnotesize (Color online)  Upper Panel. The mean of maximal cost of average DCC, \(\overline{\mathbb{A}}_{DCC}^{2}\),  against  ranks.
	All the other specifications are same as in Fig. \ref{fig:meanplotDCC}. }
\label{fig:newmeanplotDCC}
\end{figure}

\item \textit{Effect of number of party doing pre-processing:} As mentioned before, in the two-qubit scenario, no POVMs by the sender  enhances the DCC while the receiver's POVM help. However, when both parties apply POVMs, the enhancement is more pronounced than the case when only the receiver acts which can be confirmed by all the figures of merits considered here to measure the performance of DC in this case. 

\end{enumerate}

\begin{table}[]
	\resizebox{0.5\textwidth}{!}{\begin{minipage}{0.6\textwidth}
			\caption{ Mean of average cost of optimum dense coding capacity, $\overline{\mathbb{A}}_{DCC}^{1}$ and maximal cost of average dense coding capacity, $\overline{\mathbb{A}}_{DCC}^{2}$ for two-qubit random states. The label of the column,  ``Receiver'' and ``Both'' indicate respectively the quantities after POVMs  are applied by the receiver and after both the parties perform  POVMs.}
			\label{meanDCCtab}
			\centering
		\begin{tabular}{|l|l|l|l|l|l|}
			\hline
			& \multicolumn{1}{c|}{Before} & \multicolumn{2}{c|}{Receiver}                                                                                                                                                                                             & \multicolumn{2}{c|}{Both}                                                                                                                                                                                                 \\ \hline
			& \multicolumn{1}{c|}{}       & \multicolumn{1}{c|}{}                                                                                        & \multicolumn{1}{c|}{}                                                                                      & \multicolumn{1}{c|}{}                                                                                        & \multicolumn{1}{c|}{}                                                                                      \\ \hline
			& \multicolumn{1}{c|}{}       & \multicolumn{1}{c|}{$\overline{\mathbb{A}}_{DCC}^{1}$} & \multicolumn{1}{c|}{$\overline{\mathbb{A}}_{DCC}^{2}$} & \multicolumn{1}{c|}{$\overline{\mathbb{A}}_{DCC}^{1}$} & \multicolumn{1}{c|}{$\overline{\mathbb{A}}_{DCC}^{2}$} \\ \hline
			&                             &                                                                                                              &                                                                                                            &                                                                                                              &                                                                                                            \\ \hline
			Rank-1 & 1.48                        & 1.373                                                                                                        & 1.48                                                                                                       & 1.39                                                                                                         & 1.48                                                                                                       \\ \hline
			Rank-2 & 1.07                        & 1.07                                                                                                         & 1.21                                                                                                       & 1.11                                                                                                         & 1.22                                                                                                       \\ \hline
			Rank-3 & 1.0043                      & 1.01                                                                                                         & 1.15                                                                                                       & 1.03                                                                                                         & 1.21                                                                                                       \\ \hline
			Rank-4 & 1.00026                     & 1                                                                                                            & 1.02                                                                                                       & 1.013                                                                                                        & 1.2                                                                                                        \\ \hline
		\end{tabular}
	\end{minipage}}
\end{table}


\begin{table}[]
	\resizebox{0.5\textwidth}{!}{\begin{minipage}{0.7\textwidth}
			\caption{Percentage of non-classical dense coding capacity for three-qubit states with rank-1 to rank-6 before and after POVMs. All the notations are the same as in Table \ref{meanDCCtab}.  Again ``Before'' denotes the percentage of states, giving a quantum advantage in the  2S-1R DC protocol without pre-processing. } 
			\label{3DCCperc}
			\centering
			\begin{tabular}{|l|r|r|r|r|r|r|r|}
				\hline
				\multicolumn{1}{|c|}{} & \multicolumn{1}{c|}{Before} & \multicolumn{2}{c|}{Senders}                                                                                   & \multicolumn{2}{c|}{Receiver}                                                                                  & \multicolumn{2}{c|}{Both}                                                                                      \\ \hline
				\multicolumn{1}{|c|}{} & \multicolumn{1}{c|}{}       & \multicolumn{1}{c|}{}                                  & \multicolumn{1}{c|}{}                                 & \multicolumn{1}{c|}{}                                  & \multicolumn{1}{c|}{}                                 & \multicolumn{1}{c|}{}                                  & \multicolumn{1}{c|}{}                                 \\ \hline
				\multicolumn{1}{|c|}{} & \multicolumn{1}{c|}{}       & \multicolumn{1}{c|}{$\overline{\mathbb{A}_{DCC}^{1}}$} & \multicolumn{1}{c|}{$\overline{\mathbb{A}_{DCC}^{2}}$} & \multicolumn{1}{c|}{$\overline{\mathbb{A}_{DCC}^{1}}$} & \multicolumn{1}{c|}{$\overline{\mathbb{A}_{DCC}^{2}}$} & \multicolumn{1}{c|}{$\overline{\mathbb{A}_{DCC}^{1}}$} & \multicolumn{1}{c|}{$\overline{\mathbb{A}_{DCC}^{2}}$} \\ \hline
		& \multicolumn{1}{l|}{}       & \multicolumn{1}{l|}{}                                  & \multicolumn{1}{l|}{}                                 & \multicolumn{1}{l|}{}                                  & \multicolumn{1}{l|}{}                                 & \multicolumn{1}{l|}{}                                  & \multicolumn{1}{l|}{}                                 \\ \hline
	Rank-1                 & 100\%                       & 95.55\%                                                & 99.96\%                                               & 98\%                                                   & 99.99\%                                               & 87.82\%                                                & 99.83\%                                               \\ \hline
Rank-2                 & 50.31\%                     & 86.72\%                                                & 92.35\%                                               & 92.28\%                                                & 91.25\%                                               & 45.83\%                                                & 96.05\%                                               \\ \hline
Rank-3                 & 0.08\%                      & 77.33\%                                                & 79.64\%                                               & 77.47\%                                                & 78.45\%                                               & 36.26\%                                                & 87.36\%                                               \\ \hline
Rank-4                 & 0\%                         & 58.1\%                                                 & 75.26\%                                               & 38.91\%                                                & 50.26\%                                               & 30.6\%                                                 & 86.84\%                                               \\ \hline
Rank-5                 & 0\%                         & 49.07\%                                                & 73.96\%                                               & 10.21\%                                                & 48.85\%                                               & 25.32\%                                                & 85.28\%                                               \\ \hline
Rank-6                 & 0\%                         & 39.78\%                                                & 71.23\%                                               & 7.26\%                                                 & 45.21\%                                               & 18.96\%                                                & 82.44\%                                               \\ \hline
\end{tabular}
	\end{minipage}}
\end{table}

\subsubsection{Local POVMs by two senders are more effective than a single receiver}

Three-qubit Haar uniformly generated states with rank-1 to  rank- 6 shared between two senders, $S_1$ and $S_2$ and a single receiver $R$ are considered. 
All three-qubit pure random states show a quantum advantage in DC since the random states are typically genuinely multiparty entangled and hence \(S(\rho^R)\) is positive for all of  them. With the increase of rank, states showing non-classical DCC decreases and we do not find a single randomly generated state having rank$\geq $4 which has \(\mathbb{C}^{2S-1R} (\rho^{S_1S_2R}) > 2\) as shown in Table \ref{3DCCperc}. Unlike two-qubit states, we observe that local POVMs applied by the senders  can  also help to enhance DCC probabilistically (see Fig. \ref{fig:maxprobDCC}). Figs. \ref{fig:meanplotDCC} and \ref{fig:newmeanplotDCC} depict the enhancement on average by considering \(\overline{\mathbb{A}}_{DCC}^1\) and \(\overline{\mathbb{A}}_{DCC}^2\) due to the application of local POVMs before starting the protocol.  
In stark contrast with the two-qubit case, we observe that if senders can apply local POVMs, the maximal cost of average dense coding capacity gets more increased compared to the situation when only receiver performs POVM. 
Moreover,  our results demonstrate that to obtain a quantum advantage in a multipartite DC scheme for random density matrices, pre-processing is essential.  

\subsubsection{Effects of POVM on the upper bound of DCC with 2S-2R case}

Since  two senders- two receivers DC scenario, only  upper bound is known, we will now see whether upper bound can be enhanced by using  pre-processing on the shared states. It is interesting to note here that there are states for which the upper bound on DCC by LOCC can be saturated.   All the four-qubit pure states which are, in general, genuinely multipartite entangled states show \(\mathbb{C}^{2S-2R} (\rho^{S_1 S_2 R_1 R_2}) 
\leq U^{2S-2R} ( >2 )\). 
 Interestingly,  we observe that  \(\overline{\mathbb{A}}_{DCC}^i,\, i=1, 2\) increases after applying optimal POVMs by both the parties even for pure states which is not true for DC protocol involving a single receiver. 
As seen from Figs. \ref{fig:maxprobDCC}, \ref{fig:meanplotDCC} and \ref{fig:newmeanplotDCC} and Table \ref{4DCC},  
for rank-2 to rank-4 four-qubit Haar uniformly generated states,  the upper bound can again be improved substantially if the parties perform local POVM.  Like DC with  \(2S-1R\) scenario, senders can increase the upper bound more  by acting POVMs  compared to the case when receivers apply local POVMs which is prominent for \(\overline{\mathbb{A}}_{DCC}^2\). 


\begin{table}[]
	\resizebox{0.5\textwidth}{!}{\begin{minipage}{0.7\textwidth}
			\caption{\(\overline{\mathbb{A}}_{DCC}^i,\, i=1, 2\) are listed for four-qubit states performing a DC protocol with two senders and two receivers. 
			}  
			\label{4DCC}
			\centering
			\begin{tabular}{|l|l|l|l|l|l|l|l|}
				\hline
				& Before & \multicolumn{2}{l|}{Senders}                                          & \multicolumn{2}{l|}{Receivers}                                        & \multicolumn{2}{l|}{Both}                                             \\ \hline
				&        &                                   &                                   &                                   &                                   &                                   &                                   \\ \hline
				&        & $\overline{\mathbb{A}_{DCC}^{1}}$ & $\overline{\mathbb{A}_{DCC}^{2}}$ & $\overline{\mathbb{A}_{DCC}^{1}}$ & $\overline{\mathbb{A}_{DCC}^{2}}$ & $\overline{\mathbb{A}_{DCC}^{1}}$ & $\overline{\mathbb{A}_{DCC}^{2}}$ \\ \hline
				&        &                                   &                                   &                                   &                                   &                                   &                                   \\ \hline
				Rank-1 & 2.4    & 2.16                              & 2.757                             & 2.1                               & 2.412                             & 2.45                              & 2.792                             \\ \hline
				Rank-2 & 2.16   & 2.15                              & 2.646                             & 2.07                              & 2.182                             & 2.41                              & 2.747                             \\ \hline
				Rank-3 & 2.1    & 2.12                              & 2.586                             & 2.032                             & 2.115                             & 2.35                              & 2.63                              \\ \hline
				Rank-4 & 2.06   & 2.1                               & 2.487                             & 2.021                             & 2.074                             & 2.27                              & 2.58                              \\ \hline
			\end{tabular}
	\end{minipage}}
\end{table}

\begin{figure}[!ht]
	\resizebox{8cm}{7cm}{\includegraphics{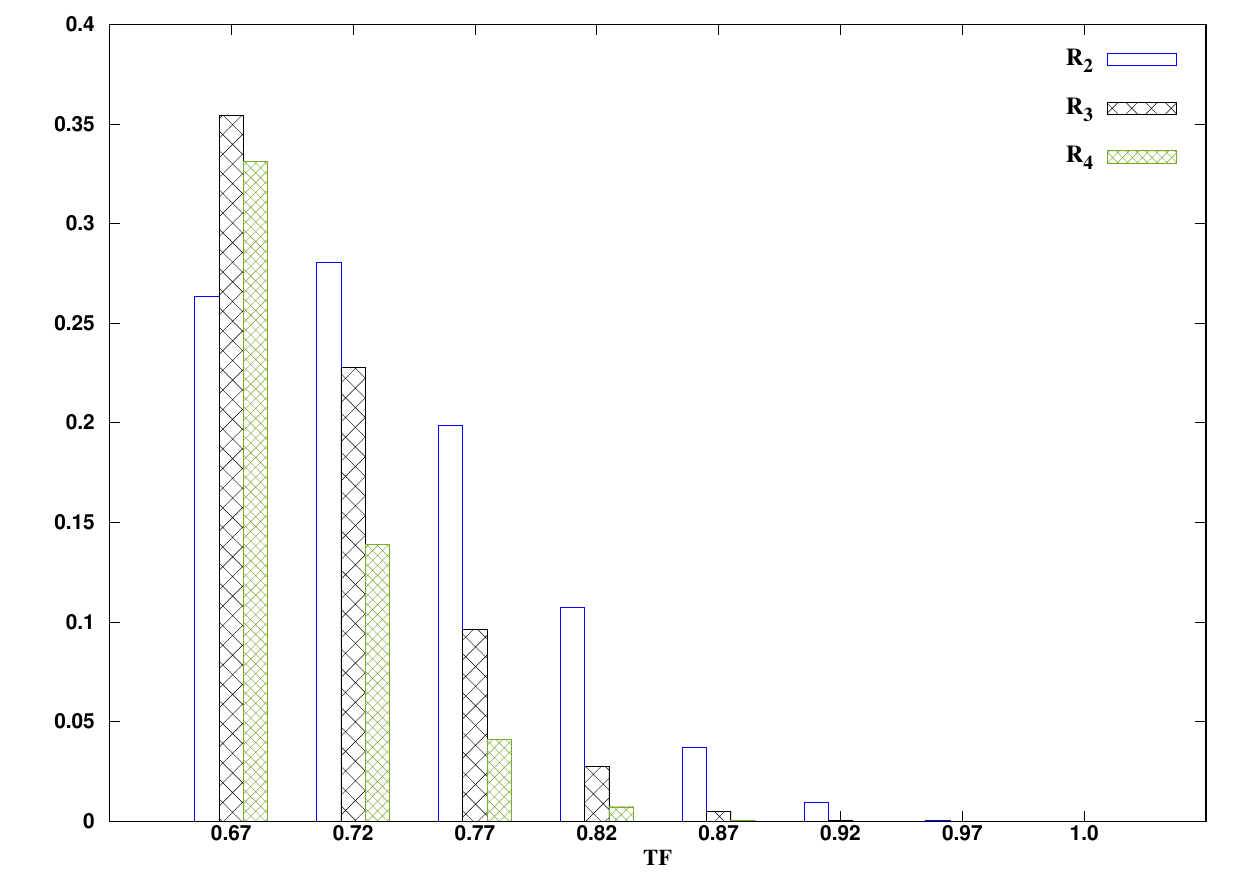}}
	\caption{ (Color online)  Normalized frequency distribution of TF, as defined in case of DC, for Haar uniformly generated two-qubit states (vertical axis) against  non-classical TF (horizontal axis).  All the axes are dimensionless.}
\label{before_hist_TF}
\end{figure}

\section{Teleportation Fidelity for Random States}
\label{TF}

Let us now move to another quantum communication protocol, in particular, quantum teleportation. Let us first analyze the frequency distribution of the teleportation  fidelity  for random two-qubit states with different  ranks in Fig. \ref{before_hist_TF}.   It was realized from different studies that higher the entanglement, higher is the TF of the two-qubit states and all pure two-qubit states are good for quantum teleportation as well as they violate Bell inequality. It was found \cite{Horo96} that  TF and violation of Bell inequality \cite{Bell, CHSH} are connected.
We observe that non classical TF for random states decreases with the  increase of the ranks of the states. For example, we find that  $48.2 \%$ rank-4 states have TF in the classical range while in rank-2 and rank-3, the percentages turn out to be \(10.14\%\) and \(20.91\%\) respectively.

Let us now show that with increasing rank, the relative number of states that possess local hidden variable model but gives non-classical fidelity increases.
 For example, $90 \%$ rank-2 states have  \(\mathbb{F} > 2/3\) out of which $67.9 \%$ are local while for rank-3 and -4, $93 \%$ and $98\%$ are local among  $73 \%$ and $51.8 \%$ states which show quantum advantage in teleportation respectively. Entire calculations and analysis are based on $10^6$ Haar uniformly generated states for each case. We demonstrate the action of local pre-processing operations in revealing the hidden TF of such states. In this regard, we later present the exact POVM operations that either one party or both the party has to apply on the shared pure random state to achieve optimum TF. 

\subsection{Effect of local pre-processing on teleportation fidelity}
\label{subsec:preprocessTF}

Like the DC protocol, either the sender or the receiver  or both the parties  apply the local dichotomic POVMs (in Eq. (\ref{POVM_i})) to activate the  teleportation fidelity with a non vanishing probability. We show that pre-processing sometimes allows us to enhance TF well beyond the classical limit (we call it as hidden TF) or to increase the TF beyond the initial fidelity which we refer it as enhanced TF. 
Note that  if the post-processed state has TF below \(2/3\), we discard the state  and follow the best classical protocol.  As considered in the dense coding protocol,  we  define three quantities to monitor the action of pre-processing operations on TF. Specifically, we evaluate the optimal increase in TF, denoted by \(\mathbb{O}_{TF}\), average cost of optimum TF, \(\mathbb{A}_{TF}^1\) and maximal cost of average TF, \(\mathbb{A}_{TF}^2\) which are  respectively defined as in Eqs. (\ref{Optimumcost_DCC}), (\ref{averagecost1_DCC}) and (\ref{averagecost2_DCC})  by replacing \(\mathbb{C}\) by \(\mathbb{F}\).

We now consider the action of pre-processing when the shared state is a random pure two-qubit state.\\

\begin{figure}[!ht]
	\resizebox{8cm}{4.5cm}{\includegraphics{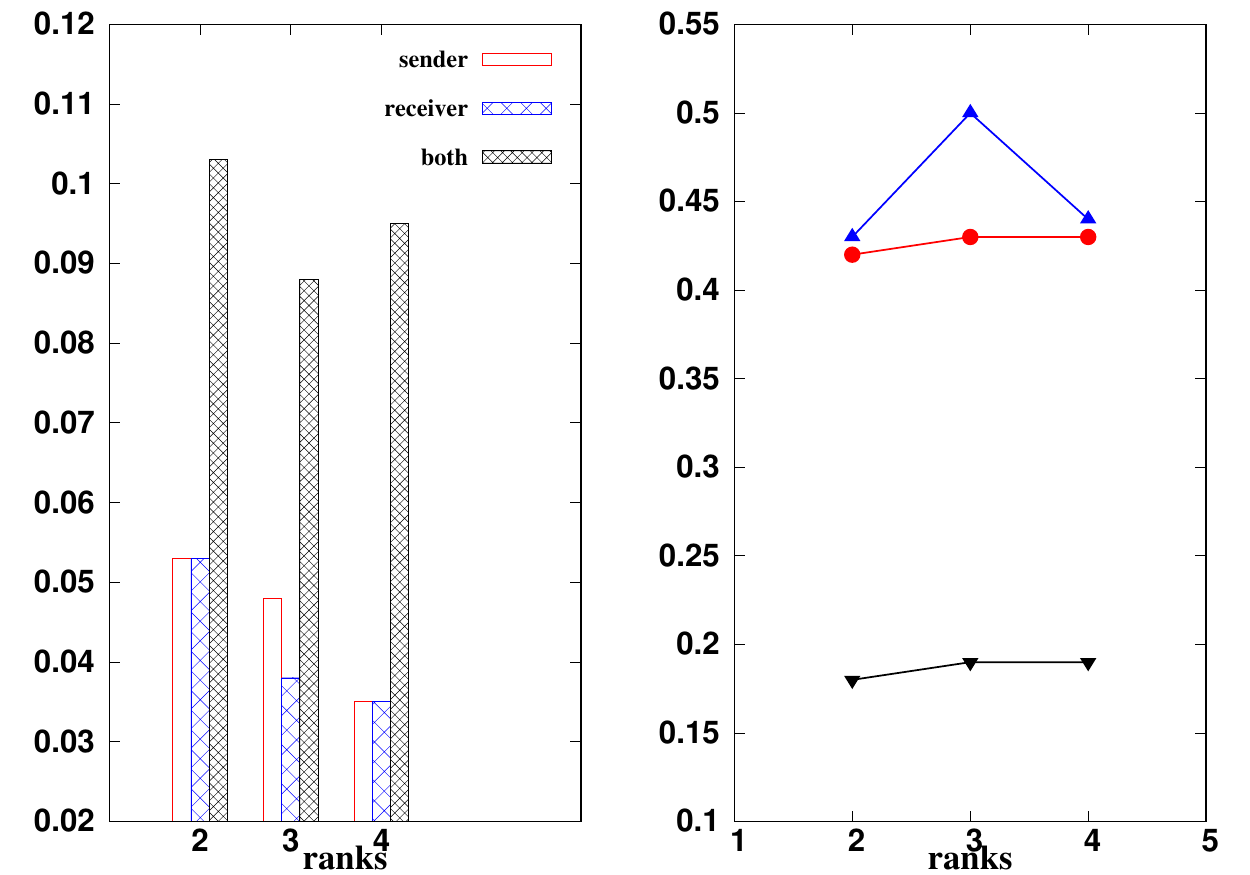}}
	\caption{\footnotesize (Color online)  (Left) The optimum increment in teleportation fidelity on average, denoted by \(\overline{\mathbb{O}}_{TF}\), vs. ranks of two-qubit states. (Right)  \(\overline{p}_{\mathbb{O}_{TF}}\) against ranks. Plots clearly show the trade off between the increment in TF and the success probability. All the axes are dimensionless. 
	}
\label{fig:maxprobTF}
\end{figure}

\begin{figure}[!ht]
	\resizebox{8cm}{5cm}{\includegraphics{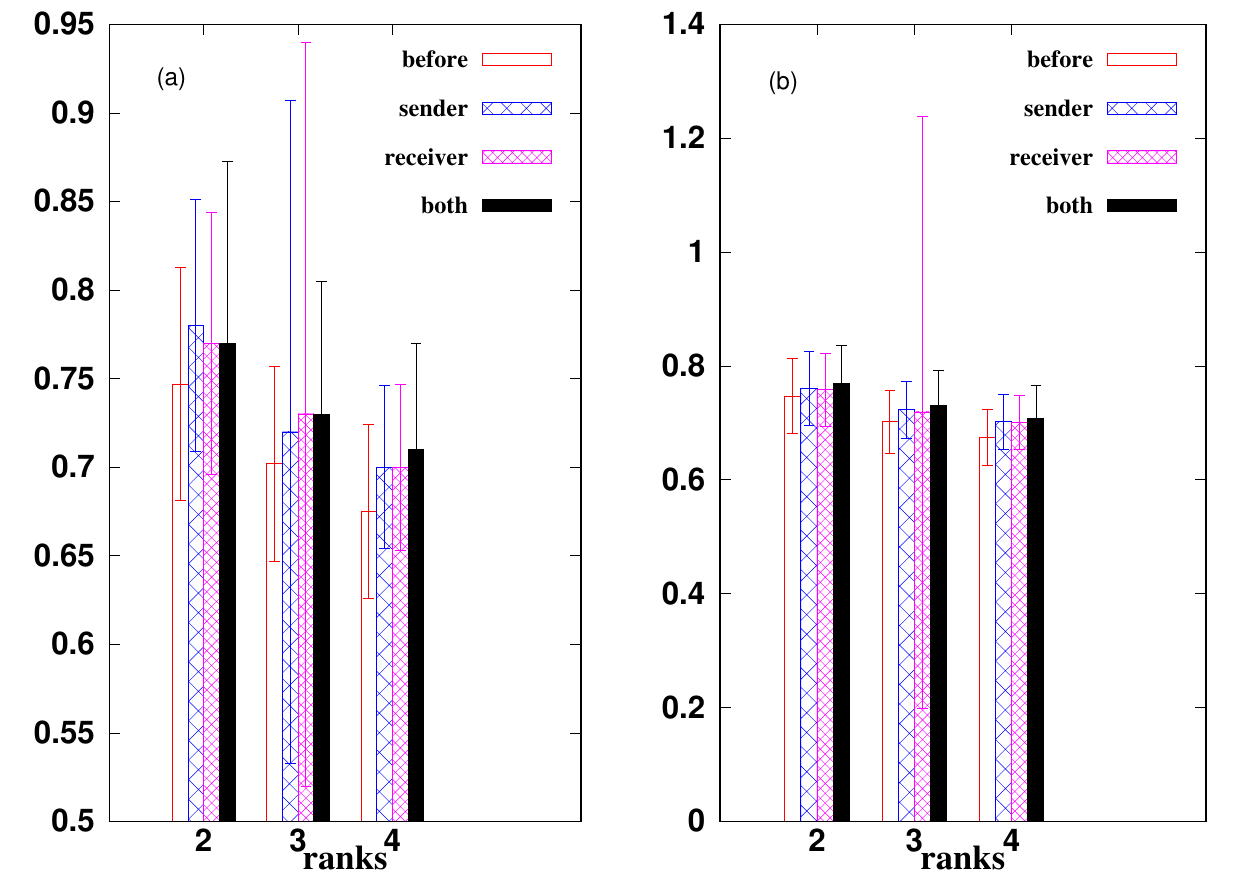}}
	\caption{ (Color online)   Left plot represents mean of average cost of optimum TF,	$\overline{\mathbb{A}}_{TF}^{1}$ while  right one is for $\overline{\mathbb{A}}_{TF}^{2}$ vs. ranks. SD are shown as error bars. All the axes are dimensionless. }
\label{fig:meanplotTF}
\end{figure}

\begin{table}[]
	\resizebox{0.5\textwidth}{!}{\begin{minipage}{0.6\textwidth}
			\caption{Mean of the average cost of optimum TF and maximal cost of average TF after applying POVMs. ``Before" again represents  TF  of random states on average for a given rank, without the action of local POVM. } 
			\label{meanSDTF}
			\centering
				\begin{tabular}{|l|l|l|l|l|l|l|l|}
				\hline
				& \multicolumn{1}{c|}{Before} & 
				\multicolumn{2}{c|}{Sender}                                                                                                                                                                                               & \multicolumn{2}{c|}{Receiver}                                                                                                                                                                                             & \multicolumn{2}{c|}{Both}                                                                                                                                                                                                 \\ \hline
				& \multicolumn{1}{c|}{}       & \multicolumn{1}{c|}{}                                                                                       & \multicolumn{1}{c|}{}                                                                                       & \multicolumn{1}{c|}{}                                                                                       & \multicolumn{1}{c|}{}                                                                                       & \multicolumn{1}{c|}{}                                                                                       & \multicolumn{1}{c|}{}                                                                                       \\ \hline
				& \multicolumn{1}{c|}{}       & \multicolumn{1}{c|}{$\overline{\mathbb{A}}_{TF}^{1}$} & \multicolumn{1}{c|}{$\overline{\mathbb{A}}_{TF}^{2}$} & \multicolumn{1}{c|}{$\overline{\mathbb{A}}_{TF}^{1}$} & \multicolumn{1}{c|}{$\overline{\mathbb{A}}_{TF}^{2}$} & \multicolumn{1}{c|}{$\overline{\mathbb{A}}_{TF}^{1}$} & \multicolumn{1}{c|}{$\overline{\mathbb{A}}_{TF}^{2}$} \\ \hline
				&                             &                                                                                                             &                                                                                                             &                                                                                                             &                                                                                                             &                                                                                                             &                                                                                                             \\ \hline
				Rank 2 & 0.747                       & 0.761                                                                                                       & 0.78                                                                                                        & 0.758                                                                                                       & 0.77                                                                                                        & 0.769                                                                                                       & 0.77                                                                                                        \\ \hline
				Rank 3 & 0.702                       & 0.723                                                                                                       & 0.72                                                                                                        & 0.719                                                                                                       & 0.73                                                                                                        & 0.73                                                                                                        & 0.73                                                                                                        \\ \hline
				Rank 4 & 0.675                       & 0.702                                                                                                       & 0.7                                                                                                         & 0.701                                                                                                       & 0.7                                                                                                         & 0.708                                                                                                       & 0.71                                                                                                        \\ \hline
			\end{tabular}
	\end{minipage}}
\end{table}

\begin{table}[]
	\resizebox{0.5\textwidth}{!}{\begin{minipage}{0.6\textwidth}
			\caption{Percentage of states showing non classical teleportation fidelity before and after the actions of POVM.   All other specifications are the same as in Table \ref{meanSDTF}. } 
			\label{percentageTF}
			\centering
		\begin{tabular}{|l|l|l|l|l|l|l|l|}
			\hline
			& \multicolumn{1}{c|}{Before} & \multicolumn{2}{c|}{Sender}                                                                                                                                                                                               & \multicolumn{2}{c|}{Receiver}                                                                                                                                                                                             & \multicolumn{2}{c|}{Both}                                                                                                                                                                                                 \\ \hline
			& \multicolumn{1}{c|}{}       & \multicolumn{1}{c|}{}                                                                                       & \multicolumn{1}{c|}{}                                                                                       & \multicolumn{1}{c|}{}                                                                                       & \multicolumn{1}{c|}{}                                                                                       & \multicolumn{1}{c|}{}                                                                                       & \multicolumn{1}{c|}{}                                                                                       \\ \hline
			& \multicolumn{1}{c|}{}       & \multicolumn{1}{c|}{$\overline{\mathbb{A}}_{TF}^{1}$} & \multicolumn{1}{c|}{$\overline{\mathbb{A}}_{TF}^{2}$} & \multicolumn{1}{c|}{$\overline{\mathbb{A}}_{TF}^{1}$} & \multicolumn{1}{c|}{$\overline{\mathbb{A}}_{TF}^{2}$} & \multicolumn{1}{c|}{$\overline{\mathbb{A}}_{TF}^{1}$} & \multicolumn{1}{c|}{$\overline{\mathbb{A}}_{TF}^{2}$} \\ \hline
			&                             &                                                                                                             &                                                                                                             &                                                                                                             &                                                                                                             &                                                                                                             &                                                                                                             \\ \hline
			Rank 2 & 89.96\%                     & 95.99\%                                                                                                     & 99.5\%                                                                                                      & 94.88\%                                                                                                     & 99.45\%                                                                                                     & 87.68\%                                                                                                     & 99.65\%                                                                                                     \\ \hline
			Rank 3 & 79.09\%                     & 87.51\%                                                                                                     & 99.26\%                                                                                                     & 79.3\%                                                                                                      & 99.17\%                                                                                                     & 86.03\%                                                                                                     & 99.58\%                                                                                                     \\ \hline
			Rank 4 & 52.04\%                     & 82.62\%                                                                                                     & 99\%                                                                                                        & 78.87\%                                                                                                     & 89.95\%                                                                                                     & 83.51\%                                                                                                     & 99.48\%                                                                                                     \\ \hline
		\end{tabular}
	\end{minipage}}
\end{table}


$\bullet$\emph { TF after POVM on arbitrary two-qubit density matrices. } The effectiveness of local pre-processing operations in enhancing the TF of two-qubit random states  is studied. In the two-qubit scenario, the optimal TF achievable from a shared two-qubit state is already known \cite{Verstraete, Badziag}. Here, we compare the optimal fidelity already known with the POVMs considered in this paper. 

\begin{enumerate}
\item \textit{Efficacy of POVMs increases with  ranks. } For a fixed rank,  the number of states showing teleportation fidelity more than the classical bound increases probabilistically if the sender or the receiver or both perform local dichotomic POVMs (see Table \ref{percentageTF}). If one increases the rank, such increment is dramatic as quantified by \(\overline{\mathbb{O}}_{TF}\),  especially after the action of POVM by both the parties, as shown in  Fig. \ref{fig:maxprobTF}. Unlike DC protocol with two-qubits, the sender can also help to increase the TF by applying local POVM. 

\item We observe that the mean values of \(\mathbb{A}_{TF}^i, \, i=1,2\)  do not change so much  after the action of  POVMs which is  different than the one observed in case of the DC protocol (comparing Table \ref{meanSDTF} and Fig. \ref{fig:meanplotTF} with Table \ref{meanDCCtab} and  Fig. \ref{fig:meanplotDCC}).  


\item In general,  we observe that TF can be enhanced maximally when both the parties perform an optimal POVM although the probability of obtaining such outcome on average is less in this case compared to the one when the sender or the receiver performs POVM.  On contrary, we find that for rank-2 random states,  the average cost of optimum TF is more when  pre-processing is on the sender's side only than on both sides. This is possibly due to the fact that during averaging,  TF corresponding to some of the outcomes is very small and for both-sided POVMs, a number of such outcomes are more compared to the single-sided ones. 
\end{enumerate}

\section{Conclusion}
\label{Conclusion}

It is hard to emphasize the role of dense coding and teleportation  protocols to build a new arena of research dealing with quantum technologies. In laboratories,  perfect dense coding capacity (DCC) and teleportation fidelity cannot be  achieved due to the presence of different decohering factors and imperfections. Therefore, it is of utmost importance to devise a technique to restore quantum advantage as much as possible from low-performing states. It is usually done via pre-processing of channels which include distillation and  filtering processes. By using these techniques, specific protocols are known for a specific class of states or for two-qubits.  

In this work,  we characterized TF for random two-qubit states and DCC for random  two-, three- and four-qubits before any pre-processing.  We then showed that substantial  activation and enhancement in capacities, as well as fidelities, can happen after applying local pre-processing by the sender(s) and the receiver(s).   For rank-2 two-qubit and three-qubit states, we analytically found  that DCC of rank-2 states having the same amount of entanglement with pure two-qubit states and three-qubit generalized Greenberger-Horne-Zeilinger state is lower than that of the pure states, thereby establishing the fact that DCC and entanglement content of the shared states are not interconnected. We also proved that Werner state provides a lower bound on the DCC of rank-2 two-qubit states provided they possess the same amount of entanglement. Both upper and lower bounds obtained turned out to be true for any two- and three-qubit density matrices. Numerical simulations also showed that  the lower bound holds for rank-2 states after local pre-processing. 
We defined three distinct figures of merit to access the advantage of local  pre-processing. We  found  that the fraction of states exhibiting quantum advantage in DC and teleportation decreases with the increase of rank which can be overcome by means of local pre-processing operations before beginning  the protocols. In the case of teleportation, it is interesting to see that for rank-$3$ and rank-$4$ states, $93\%$ and $98\%$ of states showing non-classical TF does not violate Clauser-Horne-Shimony-Holt Bell inequality \cite{CHSH}.


\section*{acknowledgement}
We acknowledge the support from the Interdisciplinary Cyber Physical Systems (ICPS) program of the Department of Science and Technology (DST), India, Grant No.: DST/ICPS/QuST/Theme- 1/2019/23. We  acknowledge the use of \href{https://github.com/titaschanda/QIClib}{QIClib} -- a modern C++ library for general purpose quantum information processing and quantum computing (\url{https://titaschanda.github.io/QIClib}) and cluster computing facility at Harish-Chandra Research Institute.

\section*{Appendix: State dependent pre-processing by pure states}

In the case of pure two-qubit states, we know that all the states are dense codable as well as can give non classical teleportation fidelity. Any  pure two-qubit state  can be written in the Schmidt form \cite{nielsenchuang} as 
\begin{equation}
\rho^{SR} = \cos(\theta/2) |00\rangle + \sin(\theta/2)|11\rangle  
\label{r1schmidt}
\end{equation}
where $|0\rangle$ and $|1 \rangle $ are the orthonormal basis. Let us consider two situations, when the sender (the receiver) performs pre-processing and when both of them perform pre-processing. 
\begin{enumerate}
	\item \textit{Pre-processing by  sender  (receiver)  } Let us suppose  the sender $S$ ( the receiver $R$) performs the following pre-processing operations on its part of the qubit \cite{Gisin}. 
	\begin{equation}
P_{S}^+ = \begin{pmatrix}
\tan(\theta) && 0 \\
0 && 1
\end{pmatrix}
;P_{S}^- =  \begin{pmatrix}
\sqrt{1 - \tan^2(\theta)} && 0 \\
0 && 0.
\end{pmatrix}
\end{equation}
If $S$ gets the outcome '$+$',  the resultant state becomes a maximally entangled state whose dense coding capacity is $2$ and the TF is unity. Note that following the notations in Eq. (\ref{POVM_state1}), $\sqrt{E_{S}^+} = P_{S}^+$ and the normalized state after the action of pre-processing is given by Eq.(\ref{POVM_state1}). The success probability is $\text{tr}[(P_{S}^+ \otimes \openone_{R}).\rho^{SR}.(P_{S}^{+^{\dagger}} \otimes \openone_{R})]$. 
	
%
	\item \textit{Both sender and receiver do pre-processing.} When both the sender and the receiver apply  the following operations on their part of the qubit:
	\begin{eqnarray}
P_{S}^+ &=& \begin{pmatrix}
\sqrt{\tan(\theta)} && 0 \\
0 && 1
\end{pmatrix}
;P_{S}^- =  \begin{pmatrix}
\sqrt{1 - \tan(\theta)} && 0 \\
0 && 0
\end{pmatrix} \nonumber \\
P_{R}^+ &=& \begin{pmatrix}
\sqrt{\tan(\theta)} && 0 \\
0 && 1
\end{pmatrix}
;P_{R}^- =  \begin{pmatrix}
\sqrt{1 - \tan(\theta)} && 0 \\
0 && 0.
\end{pmatrix}
\end{eqnarray}
and if they get the outcome '$++$', the output state is maximally entangled, giving maximal DCC and TF. As in Eq. (\ref{POVM_state1}), $\sqrt{E_{S}^+} = P_{S}^+$, $\sqrt{E_{R}^+} = P_{R}^+$ and the normalized state after the action of pre-processing is given by Eq.(\ref{POVM_state1}). The success probability is $\text{tr}[(P_{S}^+ \otimes P_{R}^+).\rho^{SR}.(P_{S}^{+^{\dagger}} \otimes P_{R}^{+^{\dagger}})]$. 
We find that the probability of success in both situations is equals to $2 (\sin \theta)^2$. We find that although the above pre-processing  leads to  a higher \(\mathbb{O}_{DCC}\) and \(\mathbb{O}_{TF}\), compared to the state-independent method described in the paper, the average cost of optimum dense coding capacity, as well as the maximal cost of average DC (TF), turn out to be higher in the state-independent POVMs (see Figs. \ref{fig:meanplotDCC} and \ref{fig:newmeanplotDCC}). 
	
\end{enumerate}

\bibliography{bib1}
\bibliographystyle{apsrev4-1}


\end{document}